\newtheorem{theorem}{Theorem}
\newtheorem{claim}{Claim}
\newtheorem{definition}{Definition}
\newtheorem{conjecture}{Conjecture}
\newtheorem{lemma}{Lemma}
\newcommand{\rpskg}{{\tt RPSKG}~}
\newcommand{\Prob}[1]{\mathbb{P}(#1)}
\def\reals{{\mathbb R}}
\def\naturals{{\mathbb N}}
\def\rationals{{\mathbb Q}}
\newcommand{\E}{\mathbb{E}}
\newcommand{\pr}{\mathbb{P}}
\newcommand{\tv}{\mathrm{TV}}
\newcommand{\Var}{\mathrm{Var}}
\newcommand{\poly}{\mathrm{poly}}
\newcommand{\ind}{\mathds{1}}
\def\calA{\mathcal{A}}
\def\calB{\mathcal{B}}
\def\calD{\mathcal{D}}
\def\calF{\mathcal{F}}
\def\calP{\mathcal{P}}
\begin{document}

\title{Degree Distribution Identifiability of Stochastic Kronecker Graphs}

\author[1]{
Daniel G. Alabi
\footnote{Supported by the Simons Foundation (965342, D.A.) as part of the 
Junior Fellowship from the Simons Society of Fellows
and a Fellowship from Meta AI during his
graduate studies.
}
}
\affil[1] {
Columbia University
}

\author[2]{
Dimitris Kalimeris
\footnote{Work done while a Ph.D. student at Harvard University.}
}
\affil[2] {
Meta
}

\date{}

\maketitle

\begin{abstract}

Large-scale analysis of the distributions of the network graphs observed in
naturally-occurring phenomena has revealed that the degrees of such graphs follow a power-law or lognormal distribution.
Some network or graph generation algorithms rely on the
Kronecker multiplication primitive. Seshadhri, Pinar, and Kolda (J. ACM, 2013) proved that
stochastic Kronecker graph (SKG) models \textit{cannot} generate graphs with degree distribution that follows
a power-law or lognormal distribution. As a result, variants of the SKG model have been proposed
to generate graphs which \textit{approximately} 
follow degree distributions, without any significant oscillations. However, all
existing solutions either require significant additional parameterization or have no provable guarantees on the degree
distribution.

\begin{itemize}
\item In this work, we present
statistical and computational identifiability notions
which imply the
\textit{separation} of SKG models.
Specifically, we prove that SKG models in different
identifiability classes can be separated by the existence of
isolated vertices and connected components in their
corresponding generated graphs.
This could explain the large (i.e., $>50\%$)
fraction of isolated vertices
in some popular graph generation benchmarks.

\item We present and analyze an efficient algorithm that can
get rid of
oscillations in the degree distribution by mixing seeds
of relative prime dimensions. For an initial $2\times 1$
and $2\times 2$ seed,
a crucial subroutine of
this algorithm solves a
degree-2 and degree-4 optimization problem in the variables of the
initial seed, respectively.
We generalize this approach to solving optimization problems
for $m\times n$ seeds, for any $m, n\in\naturals$.

\item The use of $3\times 3$ seeds alone cannot
get rid of significant oscillations. We prove that such
seeds result in degree distribution that is bounded above
by an exponential tail and thus cannot result in a
power-law or lognormal.

\end{itemize}

The definitions of identifiability of SKG models 
could lead to a better understanding and classification
of graph generation algorithms.

\end{abstract}

\clearpage

\tableofcontents

\clearpage

\section{Introduction}

Graph models can be used to
represent biological networks,
social networks, communication networks, relationships
between products and advertisers, and so on~\citep{ClausetSN09, LeskovecCKF05, MorenoNK18, doi:10.1073/pnas.0701175104, KumarNT06}.
Representing these networks as graphs enables the mining of
properties from these graphs~\citep{ChakrabartiF06}. However,
companies that store large graph data, such
as Netflix and Meta/Facebook, cannot share most of their
graph data because of copyright, legal, and
privacy issues~\citep{NarayananS08, ChettyF19, Chetty22}.
As a result, a lot of research has been done to enable generation of graph models that realistically model
``real-world'' graphs, which should have an
approximately lognormal or power law degree
distribution~\citep{KoPiPlSeTa14}.
The Stochastic Kronecker Graph (SKG) model
has been proposed as a graph generation model and is
currently employed by the Graph500 supercomputer
benchmark, particularly because the SKG graph
generation process is easily parallelized~\citep{Graph500}.
However,
the original SKG model \textit{cannot} generate a power law or
even a log-normal distribution but instead 
is most accurately characterized as fluctuating
between a lognormal distribution and an exponential tail~\citep{KimL12}. Alternatives like the Noisy Stochastic
Kronecker Graph model are also
unsatisfactory partly due to their
unwieldy additional parameterization~\citep{Chakrabarti:2004}.

We are thus motivated by the following questions posed by~\citep{Seshadhri:2013}:
\begin{quote}
\textit{
What does the [degree] distribution oscillate between? Is the distribution
bounded below by a power law? Can we approximate the distribution with a simple closed
form function? None of these questions have satisfactory answers.}
\end{quote}

Instead of attempting to unconditionally show that certain
SKG models have certain degree distributions in certain
parameter regimes, we begin a systematic classification of
SKG models. We view the model generating process through the
lens of statistical and computational 
identifiablility~\citep{keener2010theoretical, arora2006computational, BCL11}.

Definition~\ref{def:stat} presents a notion that can be
used to classify when an SKG model or algorithm follows
a certain distribution (e.g., lognormal or exponential).
Definition~\ref{def:comp} is the computational analogue.
We conjecture that certain SKG models cannot be SKG-identifiable
with certain parameterizations of the
lognormal distribution while others can be.
Then we discuss implications of these conjectures on the
following graph properties:
the existence of
isolated vertices, the presence of connected components, and
the degree distributions.
As noted in previous work~\citep{Seshadhri:2013},
the majority (i.e., $50-70\%$)
of vertices in popular benchmark graphs are isolated. This is a 
major concern for such benchmarks since the generated graph would have
a much smaller size than intended and the average degree would be
higher than intended.
Our work offers another view on the presence or
absence of isolated vertices across benchmark graphs. Such graph
generation algorithms belong in different identifiability classes
since by Theorem~\ref{thm:intro:isolated}, it is hard (in a precise
sense) to identify seeds that generate graphs that do not have
isolated vertices.

In addition, we present an algorithm that interpolates initiator
seeds of different dimensions to generate graphs with no
significant oscillations. We conjecture that SKG models require
more than one $2\times 2$
seed to remove significant oscillations in the
degree distribution. Evidence is presented to
support this claim.

A SKG model is defined by an algorithm $\calA$, an initiator set of
seeds (usually a $2\times 2$ or $3\times 3$ matrix), and the
target size of the graph. The Kronecker
product is then applied to the seed to define a probability
distribution over the resulting graph. See
Sections~\ref{sec:additionalback} and~\ref{sec:related}
for more details on using
the Kronecker multiplication primitive to generate a graph from
a seed.

Intuitively, an SKG model is \textit{identifiable} if there
exists at least one distribution that is close to the degree
distribution of the graph corresponding to an initiator seed.
The model is \textit{computationally identifiable} if, for any
distribution, there exists an algorithm that will
find such an initiator seed, that will correspond to the distribution, in a computationally-efficiently manner!
A new approach to the classification and formalization of
SKG models is necessary since it is not clear how to
get rid of the significant variability in the degree distributions
of the model. Furthermore, the identifiability criteria still
carries over the benefits of existing SKG model variants (such
as the Noisy SKG model and the Relative Prime Stochastic Kronecker
Graph model, both of which we will discuss in detail in later
sections). The separations of statistical and computational
identifiability is also necessary because, except for
ad-hoc and brute-force approaches, 
no \textit{feasible} procedure
is known to obtain or
\textit{infer} a seed that can be used to generate a graph
with a certain degree distribution. Also, while the focus
of existing SKG investigations has been on graphs with
degree distributions that follow a power-law
or lognormal measure, the definitions presented below aim to
generalize the study of SKG target degree distributions.

For the definitions below,
$\calP(\Omega, \calF)$ is the set of all probability measures
defined on $(\Omega, \calF)$.

\begin{definition}[SKG Identifiable]

Fix $\alpha\geq 0$.
Let $(\Omega, \calF)$ be a (Borel) measurable space and
$\calD$ be any probability measure or distribution defined on
$(\Omega, \calF)$.
For any $s\in\naturals$,
consider $\calA:\reals^{s\times s}\times\naturals\rightarrow\calP(\Omega, \calF)$, a
(randomized) algorithm that outputs a degree
distribution, given a set of seeds and target size of graph.

Then $\calA$ is 
\textbf{$\alpha$ SKG identifiable} with distribution $\calD$ if
there exists a set of seeds $S$\footnote{The set of seeds is
drawn from $\reals^{s\times s}$.}
such that for all $n\in\naturals$, 
$\tv(\calA(S, n), \calD) \leq \alpha$.

\label{def:stat}
\end{definition}

For Definition~\ref{def:stat}, we do not need to
require that $\calA$ output a graph if it outputs a degree distribution
since, up to isomorphisms, every degree distribution is realizable by
a graph of a certain size.
$\tv$ is the total variation distance (or statistical distance)
between two distributions.
\footnote{
For any measurable space $(\Omega, \calF)$ and probability
measures $P$ and $Q$ defined on $(\Omega, \calF)$,
the total variation is
$\tv(P, Q) = \sup_{B\in\calF}|P(B) - Q(B)|$.
}
Inspired by the computational analog of
statistical distance
(as introduced by Goldwasser and 
Micali~\citep{GOLDWASSER1984270, GM82}), we introduce a
computational analogue of identifiability of stochastic
Kronecker graphs.
We note that other statistical measures (e.g., max divergence or R\'enyi divergence) might
be better at capturing tail behavior of distributions which could be quite interesting. We leave
the exploration of such additional measures to future work.

\begin{definition}[SKG Computationally Identifiable]

Let $\alpha(n) \geq 0$ be a sequence and
let $(\Omega, \calF)$ be a (Borel) measurable space and
$\calD$ be any probability measure or distribution defined on
$(\Omega, \calF)$ that can be represented by $\poly(n)$-bit strings.
Let $\calB = \{\calB_n\}_{n\in\naturals}$ be a sequence of
(randomized) algorithms that each can be represented by
$\poly(n)$-bit strings.
For any $s\in\naturals$ and $n\in\naturals$,
consider $\calB_n:\naturals\times\calP(\Omega, \calF)\rightarrow\reals^{s\times s}\times \calP(\Omega, \calF)$, a
(randomized) algorithm that, given any
target size of graph and target distribution,
can output a seed that can generate a graph with degree distribution
that is close to the target.

Then $\calB$ is 
\textbf{$\alpha$ SKG computationally identifiable}
with distribution $\calD$ if
$\calB$ runs in time $\poly(n, 1/\alpha)$ and
$\calB(n, \calD)$ outputs a set of seeds $S$ and distribution
$\calD_n$ (representable by $\poly(n)$-bit strings) such that
for all $n\in\naturals$, 
$\tv(\calD_n, \calD) \leq \alpha(n)$.

\label{def:comp}
\end{definition}

Definition~\ref{def:stat} considers the existence of a seed that
can be used to generate a graph with
degree distribution that matches the
target. To meet the definition, the algorithm is not 
\textit{required} to find the seed. If such a seed exists, then
a brute-force search over the domain of seeds,
$\reals^{s\times s}$, can be used to obtain such a seed. 
However, in Definition~\ref{def:comp}, we essentially restrict
the set of algorithms to feasible algorithms that can find the
seed that satisfies the distribution closeness requirement.
That is, not only must the seed exist to generate the desired
degree distribution but the algorithm must be able to find such a
seed!

Definition~\ref{def:comp} bears resemblance to a one-way
function~\citep{Goldreich00, Goldreich04a}: it should be relatively easy to
perform the stochastic Kronecker multiplication to generate a graph
but obtaining the seed that would lead to \textit{any} degree
distribution should be computationally difficult.
This simple observation leads to one of our conjectures and the
resulting implication, which will be discussed in the next section.

To gain more intuition for statistical identifiability, consider 
Figure~\ref{fig:intro} that shows instantiations of the
Relative Prime Stochastic Kronecker Graph (\rpskg) algorithm
(Algorithm~\ref{alg:rpskg}) with
different parameters. Clearly, one instantiation results in
significant oscillations while the other does not.
This shows a clear separation in the degree distributions of
the generated graphs when different seed sets are used.

\begin{figure}[ht]
    \begin{subfigure}[b]{0.5\textwidth}
        \includegraphics[width=\textwidth]{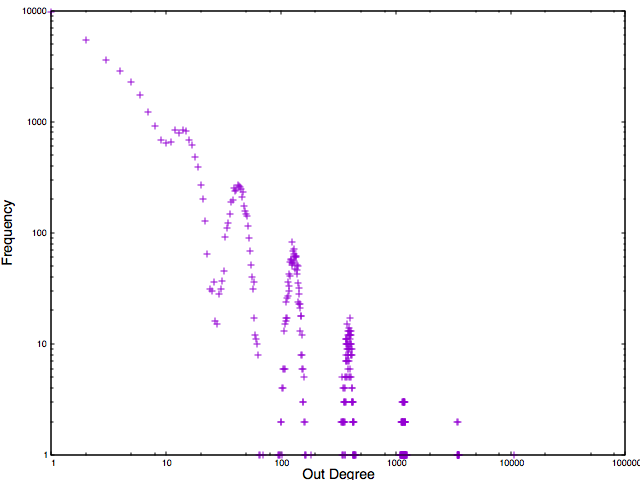}
        \caption{$\rpskg(2^{20}, 16, 0, T)$}
        \label{fig:intro0}
    \end{subfigure}
    \begin{subfigure}[b]{0.5\textwidth}
        \includegraphics[width=\textwidth]{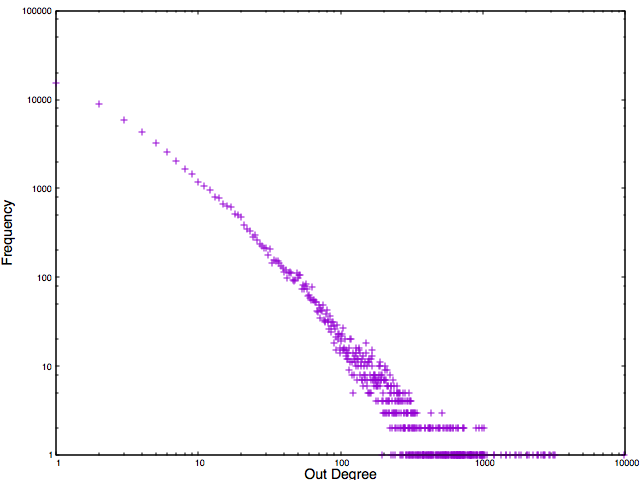}
        \caption{$\rpskg(2^{20}, 15, 1, T)$}
        \label{fig:intro1}
    \end{subfigure}
    \caption{Difference in degree distribution
    when only $2\times 2$ seeds are used (Figure~\ref{fig:intro0})
    versus when
    $3\times 3$ seeds are used with
    $2\times 2$ seeds (Figure~\ref{fig:intro1}). The two algorithms
    belong in two different identifiability classes.
    See Section~\ref{sec:exp} for other simulation results.}
    \label{fig:intro}
\end{figure}

\subsection{Our Contributions}

In our work, we use the statistical and computational identifiability
notions to show separation of SKG models. The separability is
achieved by the following
criteria: the existence of isolated vertices and connected components
in the generated graphs. In this work, we have focused on two
prominent properties of generated graphs but we leave the exploration
of other properties that exhibit separation to future work.

We conjecture that no (randomized) polynomial-time
algorithm can identify a set of seeds (one or more) that will
generate a graph with degree distribution that is
close in statistical distance to
\textit{any} target  distribution.
It is straightforward to see that, in this case,
the Kronecker multiplication primitive acts as a candidate
one-way function.
We also show that this implies that the algorithm cannot possibly be
both SKG identifiable and generate graphs that have no
isolated vertices (in the limit), even if such graphs exists.
An analogous result holds for generating a connected graph.

\begin{conjecture}
There always exists $n\in\naturals$,
a degree distribution $\calD$, and
sequence $\alpha(n)\geq 0$ 
such that
no polynomial-time algorithm $\calA$ 
exists to find a set of seeds $S$ (independent of $n$),
such that
$\tv(\calA(S, n), \calD) \leq \alpha(n)$.
\label{conj:1}
\end{conjecture}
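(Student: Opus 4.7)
The plan is to prove Conjecture~\ref{conj:1} by combining two complementary strategies: an information-theoretic counting argument that directly gives the existence of a hard target $\calD$ for fixed seed dimension, and a computational hardness strengthening that formalizes the one-way function intuition emphasized in the text preceding the conjecture.

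For the counting step, I would fix the seed dimension $s$ (the conjecture allows us to pick $\calD$ after $s$ is chosen) and consider the family of degree distributions achievable by $\calA(S, n)$ as $S$ ranges over $[0,1]^{s\times s}$. Because the Kronecker product is Lipschitz in $S$ with respect to total variation on the resulting degree distribution, an $\alpha(n)$-net of this achievable family has size at most $(1/\alpha(n))^{O(s^2)}$. In contrast, even the subset of degree distributions on $n$ nodes representable by $\poly(n)$-bit strings has cardinality $2^{\Omega(n)}$. A pigeonhole (or volume) argument on a discretized family of target distributions then produces a specific $\calD$ whose TV distance exceeds $\alpha(n)$ from every achievable distribution, so no seed---let alone one found by a polynomial-time algorithm---can satisfy the closeness requirement.

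For the cryptographic strengthening, assuming the existence of one-way functions $f$, I would build a distribution ensemble $\{\calD_z\}$ indexed by challenges $z$ such that any seed $S$ satisfying $\tv(\calA(S, n), \calD_z) \leq \alpha(n)$ reveals a preimage of $z$. Concretely, I would plant pseudorandom bits derived from $f$ (via a standard PRG construction) into the probability masses of $\calD_z$, so that recovering an approximating seed effectively distinguishes the PRG output from uniform, and hence inverts $f$ on $z$. This would match the paper's explicit analogy between Kronecker multiplication and a candidate one-way function.

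The main obstacle is the rigid multiplicative structure of SKG output distributions: they are not arbitrary probability vectors but joint product distributions built from Kronecker powers of a small seed, which severely limits which targets $\calD$ are reachable and thus how much hard-to-invert data can be hidden in $\calD_z$. A second subtlety, specific to the conjecture's phrasing, is the requirement that $S$ be ``independent of $n$''; this either calls for a diagonal construction exhibiting one specific $n$ at which every candidate $S$ fails, or a compactness argument over the countable set of candidate universal seeds. I expect the counting step to go through with modest technical effort once the Lipschitz constant of $\calA$ in $S$ is pinned down, while the cryptographic reduction is the genuinely hard component and may well require a hardness assumption tailored to the Kronecker structure rather than a generic one-way function.
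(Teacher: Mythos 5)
The first thing to note is that the paper does not prove this statement: Conjecture~\ref{conj:1} is explicitly left as an unproven conjecture and is used only as a hypothesis in Theorems~\ref{thm:isolated} and~\ref{thm:connected}. So there is no ``paper proof'' to match; your proposal is an attempt at an open problem, and it should be judged on whether it would actually close it.

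It would not, for two reasons. First, your counting/covering argument, even if the Lipschitz dependence of the degree distribution on the seed is pinned down (note that $\tv(\calA(S,n),\calA(S',n))$ can grow like $\ell\,\|S-S'\|$ under Kronecker powering, which still leaves a $\poly$-size net, so that step is plausible), only yields a target $\calD$ that is $\alpha(n)$-far from \emph{every} achievable distribution. That establishes a vacuous, purely information-theoretic version of the statement --- ``no algorithm finds a good seed because none exists'' --- which is much closer to Conjecture~\ref{conj:2} than to the computational content the paper actually extracts from Conjecture~\ref{conj:1}. In the proofs of Theorems~\ref{thm:isolated} and~\ref{thm:connected}, the conjecture is invoked to say that a polynomial-time algorithm cannot \emph{distinguish} between two seeds $T_1$ and $T_2$, both of whose degree distributions are within $\alpha(n)$ of $\calD$, exactly one of which avoids isolated vertices (resp.\ is connected). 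Your counting argument produces a $\calD$ with no nearby seeds at all, so it cannot support that use; you would be proving the letter of the statement while missing the property the paper needs from it. Second, the cryptographic reduction --- the part that would supply genuine computational hardness in the regime where good seeds do exist --- is exactly the missing step, and you concede as much: planting PRG output into the probability masses of $\calD_z$ requires that $\calD_z$ remain achievable by some seed (otherwise you are back in the vacuous case) and that an approximating seed can be decoded into a preimage of $z$, and the rigid product structure of Kronecker-generated degree distributions gives you no obvious way to do either. As it stands the proposal is a reasonable map of the difficulty, not a proof; if you pursue it, the honest deliverable from the counting half is a proof of (a version of) Conjecture~\ref{conj:2}, while Conjecture~\ref{conj:1} in the form the paper uses it still requires a hardness assumption or construction you have not supplied.
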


Conjecture~\ref{conj:1} can be used to show the following
theorem:

\begin{theorem}[Informal Version of Theorem~\ref{thm:isolated}]
Let $\calA$ be a polynomial-time algorithm that can generate
a graph, via a seed $S$, which follows a particular
degree distribution for any node set size
$n\in\naturals$.

Then, assuming Conjecture~\ref{conj:1},
there exists a degree distribution $\calD$ and parameterized function
$\alpha(n)\geq 0$ for which
$\calA$ will be unable to ensure that the generated graph
has no isolated vertices while remaining $\alpha(n)$
SKG (statistically) identifiable, in the limit.
\label{thm:intro:isolated}
\end{theorem}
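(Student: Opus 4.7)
The plan is a proof by contradiction. Suppose there is a polynomial-time algorithm $\calA$ that, via some seed $S$, generates graphs whose limiting degree distribution contains no isolated vertices, and that at the same time $\calA$ is $\alpha(n)$ SKG statistically identifiable with the distribution $\calD$ guaranteed by Conjecture~\ref{conj:1}. The goal is to derive a contradiction. First I would unpack the consequence of having no isolated vertices: for every seed $S$ and every target size $n$, the degree distribution $\calA(S,n)$ must assign zero probability mass to the outcome ``degree equals $0$.'' Writing $p_{\calD}(0)$ for the mass $\calD$ places on $\{0\}$, this immediately yields the uniform lower bound
\[
\tv(\calA(S,n),\calD) \;\geq\; p_{\calD}(0),
\]
which holds for every $n$ and every seed. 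On the other side, the identifiability hypothesis produces a seed $S$ (independent of $n$) with $\tv(\calA(S,n),\calD)\leq \alpha(n)$ for all $n$.

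Next I would invoke Conjecture~\ref{conj:1} to obtain the pair $(\calD,\alpha(n))$ witnessing polynomial-time non-approximability, and argue that we may take $\calD$ to place strictly positive mass on degree $0$. If the conjecture's hard distribution does not already satisfy this, I would work instead with the mixture $\calD' = (1-q)\calD + q\,\delta_{0}$ for a small fixed $q>0$ and note, via the triangle inequality, that any polynomial-time $\alpha(n)$-approximator of $\calD'$ yields an $(\alpha(n)+q)$-approximator of the original $\calD$; picking $\alpha(n)$ (and $q$) small enough below the hardness threshold of $\calD$ preserves the impossibility from Conjecture~\ref{conj:1} for $\calD'$. This gives us a hard target $\calD'$ with $p_{\calD'}(0) = q > 0$.

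Finally, with such a $\calD$ in hand, choose $\alpha(n)$ so that $\alpha(n) < p_{\calD}(0)$ in the limit $n \to \infty$ (which is compatible with the sequence produced by the conjecture, since the conjecture only requires $\alpha(n)\geq 0$ to be the barrier, not bounded away from $0$). The assumed SKG identifiability supplies a seed $S$ with $\tv(\calA(S,n),\calD)\leq \alpha(n) < p_{\calD}(0)$ for large $n$, directly contradicting the deterministic lower bound derived from the no-isolated-vertex assumption. Hence no polynomial-time $\calA$ can simultaneously ensure the absence of isolated vertices and remain $\alpha(n)$-SKG identifiable with $\calD$, in the limit.

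The main obstacle, as I see it, is the reduction step: ensuring that a hard distribution supplied by Conjecture~\ref{conj:1} can be taken to charge degree $0$ with positive mass without destroying the hardness. The mixture-with-point-mass idea above is the natural fix, but it requires that $\alpha(n)$ can be chosen below both the conjecture's hardness threshold and the mixture weight $q$; tuning these quantities, and verifying that the resulting $\calD'$ is itself representable by $\poly(n)$-bit strings so that the conjecture actually applies, is the delicate part. Everything else is definitional manipulation of $\tv$ together with the observation that an algorithm which never outputs isolated vertices is blind to an entire coordinate of the degree distribution.
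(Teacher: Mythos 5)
Your argument is internally consistent, but it proves a degenerate reading of the statement rather than the theorem the paper intends, and in doing so it makes Conjecture~\ref{conj:1} irrelevant. You choose $\calD$ to place positive mass $q$ on degree $0$ and observe that an algorithm which never emits isolated vertices has $\tv(\calA(S,n),\calD)\geq q$ for \emph{every} seed $S$; this is an information-theoretic obstruction that holds for all algorithms, polynomial-time or not, and needs no hardness assumption. The paper's intent (stated explicitly in the introduction: the algorithm cannot be both identifiable and isolated-vertex-free ``even if such graphs exists'') is the opposite situation: graphs that simultaneously match $\calD$ and avoid isolated vertices \emph{do} exist, and the obstruction is purely computational --- the poly-time algorithm cannot locate the seed that produces them. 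Your mixture construction $\calD'=(1-q)\calD+q\,\delta_0$ destroys exactly this feature, since no isolated-vertex-free graph can ever be $q$-close to $\calD'$, so the ``while remaining identifiable'' clause becomes vacuous rather than a genuine tension.

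The paper's actual proof is a threshold argument: it takes $\calD$ to be the degree distribution of a homogeneous Bernoulli graph and, via Lemma~\ref{lem:isolated} (a first-moment/Markov bound for $p_n\geq(\log n+\alpha(n))/n$ and a second-moment/Chebyshev bound for $p_n\leq(\log n-\alpha(n))/n$), exhibits two seeds $T_1,T_2$ whose edge probabilities differ by $2\alpha(n)/n$ --- hence whose degree distributions are within $\beta(n)\leq\alpha(n)$ in total variation --- yet exactly one of which yields isolated vertices with probability tending to $1$. Conjecture~\ref{conj:1} is then invoked to conclude that the polynomial-time algorithm cannot distinguish which of the two statistically close seeds it must use, so it cannot guarantee the isolated-vertex-free outcome while staying identifiable. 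If you want to repair your write-up, the missing ingredient is precisely this pair of statistically indistinguishable but qualitatively different graph sequences; the deterministic lower bound $\tv(\calA(S,n),\calD)\geq p_{\calD}(0)$ should be discarded, since it sidesteps rather than exhibits the computational separation the theorem is meant to capture.
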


We also show a similar phenomenon for connected components in the
generated graphs of the SKG model assuming Conjecture~\ref{conj:1}:

\begin{theorem}[Informal Version of Theorem~\ref{thm:connected}]
Let $\calA$ be a polynomial-time algorithm that can generate
a graph, via a seed $S$, which follows a particular
degree distribution for any node set size
$n\in\naturals$.

Then, assuming Conjecture~\ref{conj:1},
there exists a degree distribution $\calD$ and parameterized function
$\alpha(n)\geq 0$ for which
$\calA$ will be unable to ensure that the generated graph
always has a connected component while remaining $\alpha(n)$
SKG (statistically) identifiable, in the limit.
\label{thm:intro:connected}
\end{theorem}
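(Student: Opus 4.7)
The plan is to adapt the reduction strategy used for Theorem~\ref{thm:intro:isolated} to the connectedness setting. Assume for contradiction that there exists a polynomial-time SKG algorithm $\calA$ which (i) is $\alpha(n)$-SKG identifiable with every target distribution $\calD$ (for an appropriately chosen $\alpha(n)$) and (ii) always outputs a (nearly) connected graph in the limit. I would derive a contradiction with Conjecture~\ref{conj:1} by constructing, from $\calA$, a polynomial-time algorithm that locates a seed realizing the hard target degree distribution whose existence the conjecture promises.

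Concretely, I would instantiate $\calD$ as the hard distribution promised by Conjecture~\ref{conj:1}. Identifiability guarantees the existence of a seed $S^{*}$ with $\tv(\calA(S^{*}, n), \calD) \leq \alpha(n)$. Using assumption (ii), I would then argue that the requirement of connectedness imposes structural constraints on the entries of any admissible seed: certain entries must lie above a percolation-style threshold to prevent fragmentation of the Kronecker-generated graph into many components. These constraints should confine $S^{*}$ to a structured, effectively low-dimensional region of $\reals^{s\times s}$. Enumerating a $\poly(n,1/\alpha)$-size $\eps$-net over this region and estimating $\tv(\calA(S, n), \calD)$ by sampling (using the fact that $\calA$ itself is polynomial-time, and that $\calD$ admits a $\poly(n)$-bit description as in Definition~\ref{def:comp}) yields a polynomial-time procedure for locating a valid seed, contradicting Conjecture~\ref{conj:1}.

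The main obstacle is the middle step: converting the global topological property that the Kronecker-generated graph is connected into a local, polynomially verifiable restriction on the entries of the seed matrix. For isolated vertices the analogous restriction is essentially immediate, because a single entry of the seed controls the probability that a vertex ends up with degree zero, so the corresponding net is small. For connectedness one needs the recursive Kronecker structure together with either a percolation-threshold or spectral-gap argument to obtain a quantitative correspondence between seed entries and the (asymptotic) existence of a giant component. I would also need to be careful about the precise formalization of ``has a connected component in the limit''—most naturally, as the probability of being connected (or having a linear-sized component) tending to $1$—and calibrate $\alpha(n)$ so that closeness in $\tv$ to $\calD$ forces this asymptotic behavior. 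Once this structural correspondence is made quantitative, the remainder of the argument parallels the isolated-vertex case almost verbatim.
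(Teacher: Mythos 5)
There is a genuine gap at the center of your argument, and it is exactly the step you flag as ``the main obstacle'': converting the global connectivity guarantee into a local, polynomially enumerable restriction on the seed entries. Without that step your $\eps$-net has no reason to be of size $\poly(n,1/\alpha)$, and the search-to-decision reduction (connectivity $+$ identifiability $\Rightarrow$ efficient seed recovery $\Rightarrow$ contradiction with Conjecture~\ref{conj:1}) never gets off the ground. You gesture at a ``percolation-threshold or spectral-gap argument'' but do not supply it, so the proposal as written does not prove the theorem.

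The paper's proof avoids this entirely by running the implication in the opposite direction. It does not try to extract a seed-finding algorithm from $\calA$; instead it \emph{constructs} the hard instance directly. Take $\calD$ to be the (uniform) degree distribution of a homogeneous Bernoulli graph and place two seeds $T_1, T_2$ on either side of the Erd\H{o}s--R\'enyi connectivity threshold, with link probabilities $p_n \geq \max(9e/n, (\log n + \alpha(n))/n)$ and $p_n \leq (\log n - \alpha(n))/n$ respectively, so the edge probabilities differ by about $2\alpha(n)/n$ and the resulting degree distributions are within $\beta(n)\leq\alpha(n)$ in total variation. Lemma~\ref{lem:connected} (a first-moment bound on the expected number of components of each size $k\leq n/2$, using Cayley's formula $k^{k-2}$ to count spanning trees, together with the isolated-vertex second-moment argument of Lemma~\ref{lem:isolated}) gives the dichotomy $\pr[G_1\text{ connected}]\to 1$ versus $\pr[G_2\text{ connected}]\to 0$. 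Conjecture~\ref{conj:1} is then invoked only to say that a polynomial-time $\calA$ cannot distinguish which of the two TV-close seeds yields the connected graph. If you want to salvage your reduction-style argument you would need to actually prove the quantitative correspondence between seed entries and asymptotic connectivity of the Kronecker product graph; the paper's threshold construction is the cheaper and, here, the intended route.
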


We have established, via Theorems~\ref{thm:intro:isolated} 
and~\ref{thm:intro:connected}, the implications of not being able to
identify starting seeds for any given graph.
This is due to the computational
limitations of classical computers so that it is not possible to
efficiently find seeds that can generate a
graph for any given degree distribution that always exhibits certain
properties.
\textit{However, are there even existing efficient algorithms that,
given a seed,
can generate desired distributions without oscillations?}
Algorithm~\ref{alg:rpskg} is one such candidate algorithm.
It is able to mix $2\times 2$ and $3\times 3$ seeds to generate
graphs with degree distributions that do not have undesirable
oscillations. However, with only $3\times 3$ seeds, it still results
in a distribution that is bounded above by an exponential tail. This
result is proved in the following theorem:

\begin{theorem}[Informal Version of Theorem~\ref{thm:oscillations}]
There exists parameters $\alpha, \beta\in\naturals$ such that the
degree of any vertex $v$ in the graph is
$$
\pr[\deg(v) = d] = (1\pm o(1))\frac{\exp(d(2\alpha + \beta) - e^{2\alpha + \beta})}{d!},
$$
where $\deg(v)$ corresponds to the degree of vertex $v$.
\label{thm:intro:oscillations}
\end{theorem}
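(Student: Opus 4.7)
The plan is to express $\deg(v)$ as a sum of independent Bernoulli random variables, compute its mean exactly via the tensor structure of Kronecker products, and then invoke a Poisson approximation theorem to show the distribution is approximately Poisson. For a $3\times 3$ seed $P$ Kronecker-multiplied $\ell$ times, the graph has $n = 3^\ell$ vertices indexed by sequences $v = (v_1, \ldots, v_\ell) \in \{1,2,3\}^\ell$, and each potential edge $(u,v)$ is independently present with probability $p_{u,v} = \prod_{i=1}^\ell P[u_i, v_i]$. Writing $\deg(v) = \sum_u X_{u,v}$ with $X_{u,v} \sim \mathrm{Bernoulli}(p_{u,v})$ mutually independent, I first compute $\mu_v := \E[\deg(v)] = \prod_{i=1}^{\ell} C_{v_i}$, where $C_k = \sum_{j=1}^3 P[j,k]$ is the $k$-th column sum. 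Grouping coordinates of $v$ by value gives $\mu_v = C_1^{n_1} C_2^{n_2} C_3^{n_3}$ with $n_1 + n_2 + n_3 = \ell$. The exponent $2\alpha + \beta$ appearing in the statement then corresponds to $\log \mu_v$ under the specific parameterization of the seed adopted in the paper, in which the log column sums take a form making $\alpha$ and $\beta$ integer-valued combinatorial counts derived from the coordinates of $v$.

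Second, I would invoke Le Cam's Poisson approximation theorem for sums of independent indicators, which gives
$$\tv\bigl(\mathcal{L}(\deg(v)),\, \mathrm{Poisson}(\mu_v)\bigr) \le \sum_u p_{u,v}^2.$$
The elementary inequality $\sum_{j=1}^3 P[j, v_i]^2 \le \bigl(\max_j P[j, v_i]\bigr)\cdot C_{v_i}$ combined with the tensor structure of $p_{u,v}$ yields
$$\sum_u p_{u,v}^2 = \prod_{i=1}^\ell \sum_{j=1}^3 P[j,v_i]^2 \le \mu_v \cdot \prod_{i=1}^\ell \max_j P[j, v_i].$$
Under the standard assumption that every entry of $P$ is bounded strictly below $1$, the product $\prod_i \max_j P[j,v_i]$ decays geometrically in $\ell = \log_3 n$, and since $\mu_v$ is at most polynomial in $n$, the total variation error is $o(1)$. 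Substituting the Poisson pmf $\mu_v^d e^{-\mu_v}/d! = \exp(d \log \mu_v - \mu_v)/d!$ and setting $\log \mu_v = 2\alpha + \beta$ then delivers the claimed $(1 \pm o(1))$ expression.

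The main obstacle will be controlling the Poisson approximation error uniformly over vertices $v$ whose profile concentrates on a coordinate with a dominant column entry, since in that regime $\max_j P[j,v_i]$ can approach $C_{v_i}$ and the geometric decay slows. I would handle this by restricting attention to typical vertices whose profile $(n_1, n_2, n_3)$ lies within a few standard deviations of $(\ell/3, \ell/3, \ell/3)$---which account for all but a negligible fraction of vertices---and by splitting $\sum_u p_{u,v}^2$ according to which entries of $P$ contribute dominantly, so that the surviving factor in each coordinate is bounded away from its column sum. Once the Poisson approximation is established, the tail estimate $\Pr[\mathrm{Poisson}(\lambda) \ge d] \le (e\lambda/d)^d$ for $d > \lambda$ immediately yields the promised exponential (rather than power-law or lognormal) decay, which is the real punchline of the theorem vis-à-vis the questions raised in the introduction.
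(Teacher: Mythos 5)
Your route differs from the paper's in two respects. First, the model: you work in the independent-edge Kronecker model (each pair $(u,v)$ present independently with probability $\prod_i P[u_i,v_i]$, so $\deg(v)$ is a sum of heterogeneous Bernoullis), whereas the paper's Section~\ref{sec:threebythree} follows Seshadhri et al.\ and uses the $m$-edge-insertion model, in which $\deg(v)\sim\mathrm{Binomial}(m,p_{\alpha,\beta})$ with $p_{\alpha,\beta}$ the chance that a single inserted edge lands with out-endpoint in the slice $S_{\alpha,\beta}$ (Claim~\ref{claim:palphabeta}). Your column-sum computation $\mu_v=\prod_i C_{v_i}$ is the correct analogue of the paper's row-sum product, and both yield the same mean $\tau^{2\alpha+\beta}\Lambda$, so the informal conclusion would coincide. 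Second, the approximation tool: you use Le Cam, while the paper (Claim~\ref{eq:deg3x3}) manipulates the binomial pmf directly via $\binom{m}{d}\approx m^d/d!$ and $(1-p)^{m-d}\approx e^{-p(m-d)}$ under the explicit hypotheses $d=o(\sqrt n)$ and $p_\alpha\le 1/\sqrt m$.

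The genuine gap is in the Le Cam step. Le Cam (or Barbour--Eagleson) controls $\tv\bigl(\mathcal{L}(\deg(v)),\mathrm{Poisson}(\mu_v)\bigr)$, i.e.\ it gives an \emph{additive} error on each point probability, bounded by $\sum_u p_{u,v}^2$ (or by $\max_u p_{u,v}$ after the refinement). The theorem asserts a \emph{multiplicative} $(1\pm o(1))$ approximation of $\pr[\deg(v)=d]$ by the Poisson pmf. These are not interchangeable here, because the Poisson pmf $e^{-\mu_v}\mu_v^d/d!$ is itself $o(1)$ --- indeed exponentially small in $d$ precisely in the tail regime $|2\alpha+\beta-\ln d|$ bounded away from $0$ that is the point of the theorem --- so an additive error of order $\max_u p_{u,v}$ (which is only $\mathrm{poly}(1/n)$, not superpolynomially small) swamps the quantity being approximated and the step ``substituting the Poisson pmf then delivers the claimed $(1\pm o(1))$ expression'' does not follow. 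What your argument does prove is a one-sided tail bound $\pr[\deg(v)\ge d]\le (e\mu_v/d)^d+\tv$, which is weaker than the stated pointwise formula. To get the multiplicative form in your model you would need a local limit / pointwise Poisson approximation (e.g.\ direct expansion of the probability generating function of the Bernoulli sum), which is exactly the role the binomial-pmf manipulation plays in the paper's proof; your proposed fix of restricting to typical profiles $(n_1,n_2,n_3)$ addresses the size of the TV bound but not this additive-versus-multiplicative mismatch.
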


Note that both $\alpha$ and $\beta$ are discrete
(and not continuous) variables. 
As a consequence, the oscillations occur in the
degree distribution.
For example, when
$|2\alpha + \beta - \ln d| \geq 1/3$, the
probability of having degree $d$ is at most
$\exp(-d/9)$, an exponential tail. But when
$\ln d$ is close to integral, then there will be
many vertices of degree
$d$. Thus, there is a fluctuation in the degree
distribution with exponential tails due to the
discrete nature of the slice parameters,
$\alpha$ and
$\beta$. This reason is
similar as why oscillations occur in the
$2\times 2$ case.

This justifies the use of $2\times 2$ seeds together with
$3\times 3$ seeds in Algorithm~\ref{alg:rpskg}.
Algorithm~\ref{alg:rpskg} is able to solve a degree-4 optimization
problem in order to sample a $3\times 3$ seed from the 
uniform distribution over a given $2\times 2$ seed. 
Prior to our work, it was not known whether such an efficient
procedure exists for the SKG model.

\begin{lemma}[Informal Version of Lemma~\ref{lem:sample3x3}]
There exists a polynomial-time
degree-4 optimization formulation to sample an SKG
$3\times 3$ seed from the uniform distribution over a
$2\times 2$ seed.
\end{lemma}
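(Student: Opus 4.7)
The plan is to encode the requirement ``$S$ is a $3\times 3$ seed whose Kronecker-generated degree distribution matches, in total variation, that of the given $2\times 2$ seed $T=\begin{pmatrix}a&b\\c&d\end{pmatrix}$'' as a fixed-size polynomial feasibility problem in the nine entries $s_{ij}$ of $S$, and then to show that both the constraints and the sampling objective have total degree at most four.

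The first step is to identify a set of statistics of a seed that determines its induced degree distribution up to the desired total-variation tolerance. For any $s\times s$ seed $P$, the row sum of $P^{\otimes\ell}$ at a uniformly random row has $k$-th moment $\bigl(\frac{1}{s}\sum_{i}R_{i}^{k}\bigr)^{\ell}$, where $R_1,\dots,R_s$ are the row sums of $P$; the corresponding statement holds for column sums. Matching the first few of these power sums between $T$ and $S$ forces the row-sum distributions of $T^{\otimes\ell}$ and $S^{\otimes\ell}$ into approximate agreement, and a local central limit argument of the kind implicit in the analysis behind Theorem~\ref{thm:intro:oscillations} then pushes the induced degree distributions close in total variation.

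Concretely, I would impose the four equations $\frac{1}{3}\sum_{i=1}^{3}r_{i}^{k}=\frac{1}{2}\bigl((a+b)^{k}+(c+d)^{k}\bigr)$ for $k=1,\dots,4$ (where $r_{i}=\sum_{j}s_{ij}$), their column-sum analogues, the normalization $\sum_{ij}s_{ij}=a+b+c+d$, and the box constraints $s_{ij}\in[0,1]$. The left-hand side of each equation is a polynomial of degree $k$ in the entries of $S$, so the full system has total degree exactly four in the $s_{ij}$, with coefficients that are themselves degree-four polynomials in $(a,b,c,d)$. Uniform sampling over the resulting semi-algebraic feasible set is implemented by maximizing a uniformly random linear objective subject to these constraints, which is again a degree-four polynomial optimization.

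The main obstacle I anticipate is bounding the approximation error introduced by truncating to the first four power sums: I need to show that the residual total-variation gap between the induced degree distributions decays fast enough to satisfy the target $\alpha(n)$. To control this I would appeal to a Berry--Esseen or Edgeworth expansion applied to the log-row-sum, which is a sum of $\ell=\Theta(\log n)$ independent bounded random variables. Granted such a bound, polynomial-time solvability of the degree-four system follows from a constant-level Lasserre relaxation---exact on constant-size polynomial programs---or, alternatively, from Renegar's algorithm for the existential theory of the reals; both run in $\poly(n,1/\alpha)$ time once the number of variables ($9$) and the total degree ($4$) are fixed constants, matching the computational-identifiability requirement of Definition~\ref{def:comp}.
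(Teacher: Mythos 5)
Your proposal solves a different problem from the one the lemma is about, and the step you yourself flag as the ``main obstacle'' is a genuine gap. In this paper, ``the uniform distribution over a $2\times 2$ seed $T$'' has a concrete meaning (Definition~\ref{def:kgd}): it is the limiting density $\lim_{t\to\infty}T^{\otimes t}$ on $[0,1]^2$, and ``sampling a $3\times 3$ seed from it'' means computing the nine integrals of that density over the tiles $[\tfrac{i-1}{3},\tfrac{i}{3}]\times[\tfrac{j-1}{3},\tfrac{j}{3}]$. The paper's proof does exactly this: writing $\tfrac{1}{3}=\sum_{i\ge 1}4^{-i}$, each corner mass is an explicit sum of geometric series in the entries of $T^{\otimes 2}$, e.g.
$$A=\frac{a^2+ab\,\frac{(a+c)^2}{1-(a+c)(b+d)}+ac\,\frac{(a+b)^2}{1-(a+b)(c+d)}}{1-ad},$$
the remaining entries come from the row and column marginals such as $\frac{(a+b)^2}{1-(a+b)(c+d)}$ together with the normalization, and ``degree-4'' refers to the fact that these closed forms are ratios of polynomials of degree at most $4$ in $(a,b,c,d)$. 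This closed form is precisely what Algorithm~\ref{alg:Sample3x3} implements; nothing like a Lasserre relaxation or Renegar's algorithm is involved.

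Your reformulation---match the first four power sums of the row and column sums of $S$ to those of $T$ and sample from the resulting semialgebraic set---does not recover this object, and its correctness rests entirely on the claim that four matched moments plus a local CLT force the induced degree distributions close in total variation. That claim is unsupported and, in this setting, very likely false: the degree distribution of a Kronecker power is governed by the discrete lattice of slice probabilities $\tau^{r}$ (respectively $\tau^{2\alpha+\beta}$), and the paper's own Theorem~\ref{thm:oscillations} shows that this discreteness produces oscillations between lognormal behavior and exponential tails that no fixed number of moments can detect or control. Moreover, $T^{\otimes\ell}$ and $S^{\otimes\ell'}$ generate graphs on $2^{\ell}$ versus $3^{\ell'}$ nodes, so the two degree distributions you propose to compare in total variation do not live on comparable vertex sets without an alignment argument you do not supply; and you give no reason why your moment system is feasible, or why a generic feasible point bears any relation to the seed the lemma asks for.
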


\begin{lemma}[Informal Version of Lemma~\ref{lem:sample3x1}]
There exists a polynomial-time
degree-2 optimization formulation to sample an SKG
$3\times 1$ seed from the uniform distribution over a
$2\times 1$ seed.
\end{lemma}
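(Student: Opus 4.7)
The plan is to encode the consistency between a $2\times 1$ seed $(p,q)$ and a prospective $3\times 1$ seed $(x,y,z)$ by matching a small number of moments of their induced degree distributions, and then to show that the matching conditions form a degree-$2$ polynomial system in $(x,y,z)$. The key structural fact I would use is that the entries of $(p,q)^{\otimes k}$ are the monomials $p^{k-j}q^{j}$ with binomial multiplicity, and analogously the entries of $(x,y,z)^{\otimes k'}$ are multinomial in nature; by a local central limit argument for multinomial sums, the induced degree distribution at scale $n \approx 2^{k}3^{k'}$ is governed to leading order by the $\ell_1$ norm and by the sum of squared entries of the seed. This reduces "sampling uniformly from a $2\times 1$ seed" to sampling uniformly from the set of $3\times 1$ seeds that share these two invariants with $(p,q)$.

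Concretely, I would impose
\begin{align*}
x + y + z &= p + q, \\
x^2 + y^2 + z^2 &= \gamma(p,q),
\end{align*}
where $\gamma$ is the matching quadratic invariant read off from the $2\times 1$ second moment. The first constraint is linear and the second is of degree $2$; together they cut out an ellipse inside the affine plane $\{x+y+z = p+q\}$, and intersecting this conic with the simplex $\{x,y,z \geq 0\}$ yields a feasible arc. The optimization formulation is then either to minimize a smooth auxiliary objective (e.g.\ distance from a prior seed) over this arc, or directly to sample a point uniformly along its arc length. Because the system has only three unknowns and a single nontrivial quadratic constraint, after two eliminations the computation reduces to solving a univariate quadratic, giving a polynomial-time procedure in the bit length of $(p,q)$. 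Uniform sampling along the feasible arc is then handled by a standard rational parameterization of the conic followed by inversion of the (closed-form) arc-length function.

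The main obstacle, I expect, is justifying that the degree-$2$ moment matching is \emph{sufficient}: that when the sampled $(x,y,z)$ is plugged into \rpskg alongside $(p,q)$, the induced degree distribution is indistinguishable in the relevant sense from what the $2\times 1$ seed alone would produce, so that the phrase ``uniform distribution over a $2\times 1$ seed'' is meaningful. I would discharge this by invoking the local CLT for multinomials to show that higher-order seed moments contribute only lower-order corrections as $k, k' \to \infty$, and by checking that the feasible arc is nonempty --- exhibiting, for instance, the explicit witness $(x,y,z) = (p^2, 2pq, q^2)$ up to normalization and appealing to continuity to obtain an interior feasible point under mild genericity conditions on $(p,q)$.
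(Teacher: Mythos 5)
There is a genuine gap here, and it starts with the interpretation of the statement. In the paper, ``sample a $3\times 1$ seed from the uniform distribution over a $2\times 1$ seed'' has a precise meaning fixed by Definition~\ref{def:kgd}: given $T=[a\;b]$, one considers the limiting measure $p$ on $[0,1]$ induced by $\lim_{t\to\infty}T^{\otimes t}$, and the target $3\times 1$ seed is the \emph{specific, deterministic} vector of masses $\bigl(f(0,\tfrac13),\,f(\tfrac13,\tfrac23),\,f(\tfrac23,1)\bigr)$, where $f(x,y)=\int_x^y p(t)\,dt$. The paper computes these masses by exploiting the self-similarity of the Kronecker construction together with the base-$4$ expansion $\tfrac13=\sum_{i\ge 1}4^{-i}$, which turns $f(0,\tfrac13)$ into the geometric series $a^2+a^2(ab)+a^2(ab)^2+\cdots=\frac{a^2}{1-ab}$, and yields the closed form $U=\bigl[\frac{a^2}{1-ab}\;\;\frac{a-a^2}{1-ab}\;\;\frac{b^2}{1-ab}\bigr]$, whose numerators and denominators are degree $2$ in $(a,b)$ --- that is the ``degree-$2$ formulation.'' Your proposal instead reads the phrase as ``draw a seed uniformly at random from the set of $3\times 1$ seeds sharing two moment invariants with $(p,q)$,'' which is a different object altogether: you are sampling from a family of seeds rather than computing the one seed the KGD prescribes.

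Beyond the misreading, the technical content would not go through. Your sufficiency claim --- that matching $x+y+z$ and $x^2+y^2+z^2$ pins down the induced degree distribution up to lower-order corrections --- is exactly the kind of statement the paper's oscillation results contradict: the lognormal-versus-exponential-tail behavior depends on the full multiplicative structure of the seed, not just its first two power sums, so a local CLT for multinomials cannot discharge it. Your explicit witness is also not the correct seed: for $a=b=\tfrac12$ the KGD gives $(\tfrac13,\tfrac13,\tfrac13)$, whereas $(p^2,2pq,q^2)$ normalizes to $(\tfrac14,\tfrac12,\tfrac14)$; the correct middle entry is $\frac{a-a^2}{1-ab}$, not $2ab$ up to normalization. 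The fix is to abandon moment matching and instead compute the three interval masses directly via the geometric-series argument above, which immediately gives both the polynomial-time procedure and the degree-$2$ dependence on the original variables.
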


Also, as we show,
Algorithm~\ref{alg:rpskg} can also be generalized to sample a
$m\times n$ seeds for any $m, n\in\naturals$.
See Section~\ref{sec:degree-nm} for more details.

The second conjecture states that a $2\times 2$ set of seeds alone or
$3\times 3$ set of seeds alone is not sufficient
to generate graphs with no oscillations in the degree
distribution. (See Section~\ref{sec:exp} for full experimental
validation details.) As can be seen in Figure~\ref{fig:intro},
this conjecture is supported by experiments.
\footnote{See Section~\ref{sec:exp} for more experimental details.}
In particular,
the generation process for
Figure~\ref{fig:intro0} uses only one $2\times 2$ set of seeds while
the generation process for Figure~\ref{fig:intro1} uses
one set of $2\times 2$ seeds and one set of $3\times 3$ seeds.

Conjecture~\ref{conj:1} implies the existence of one-way functions
and cryptographic pseudorandom generators
with seed length
$d(m) < m$.
\footnote{
$G_m:\{0, 1\}^{d(m)}\rightarrow\{0, 1\}^m$ is a 
\textit{cryptographic pseudorandom generator}
where $G_m$ runs in time $m^b$ for some $b > 0$.
$f_n:\{0, 1\}^n\rightarrow\{0, 1\}^n$ is a 
\textit{one-way function} if there is a constant
$b$ such that $f_n$ is computable in time $n^b$ for large enough $n$
and for every constant $c$ and every nonuniform algorithm $\calA$
running in time $n^c$:
$\pr[\calA(f_n(U_n))\in f^{-1}_n(f_n(U_n))]\leq n^{-c}$ for all
sufficiently large $n$ where $U_n$ is a uniform random variable on
$n$ bits.
}
This is because $\calD$ can be chosen to be any distribution, representable with $\poly(n)$-bit strings, and so if the seed $S$
(which could be standardized to be of any length)
cannot be found by polynomial-time algorithms, the
Kronecker multiplication primitive (on the seed) would serve as a 
one-way function for which no efficient algorithms exist to find the
seed.

Now we present the conjecture that is supported by a swath
of experimental details (see Section~\ref{sec:exp}):

\begin{conjecture}

For all $\alpha > 0$ and renormalized
seed sets from either
$\rationals_+^{2\times 2}$ or $\rationals_+^{3\times 3}$,
there always exists a distribution $\calD$ such that
for all algorithms $\calA$,
$\tv(\calA(S, n), \calD) > \alpha$.

\label{conj:2}
\end{conjecture}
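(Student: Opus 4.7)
The plan is to attack the conjecture by characterizing the two restricted classes of SKG distributions (those generated from $2\times 2$ rational seeds alone, and those from $3\times 3$ rational seeds alone), and then constructing an explicit adversarial target $\calD$ that no member of either class can approximate within $\alpha$ in total variation. I would begin by collecting the structural features of each class. For a single $2\times 2$ seed $\begin{pmatrix}a & b \\ c & d\end{pmatrix}$ with $a+b+c+d = 1$, the Seshadhri--Pinar--Kolda analysis shows that the degree of a vertex indexed by a binary string $x$ is a sum of independent Bernoullis whose expectation depends only on the number of $1$'s in $x$. Hence the degree distribution is a mixture of at most $\log n + 1$ binomials indexed by these \emph{slices}, and so fluctuates between a lognormal body and exponential tails with periodic valleys at non-integral values of $\log_2$ of the target degree. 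For $3\times 3$ seeds, Theorem~\ref{thm:intro:oscillations} furnishes the analogous Poisson-like form $\pr[\deg(v)=d] = (1\pm o(1))\exp(d(2\alpha+\beta) - e^{2\alpha+\beta})/d!$ with discrete parameters $\alpha,\beta\in\naturals$, displaying the same oscillation signature and exponential-tail decay.

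Next I would construct an adversarial $\calD$ that is simultaneously incompatible with both signatures. A natural choice is a strict power law $\calD(d)\propto d^{-\gamma}$ whose exponent $\gamma$ is selected so that (i) polynomial decay contradicts the exponential tails of both classes on the degrees $d$ where $\ln d$ is bounded away from integers, and (ii) its smoothness contradicts the oscillatory valleys. To turn this mismatch into a TV lower bound, I would fix a witness event $B\subset\naturals$ consisting of degrees lying in the valleys (say, $d$ with $|\ln d - k|\geq 1/3$ for all $k\in\naturals$). On $B$, every admissible SKG distribution assigns vanishing mass of the form $\exp(-\Theta(d))$, while $\calD$ assigns constant polynomial mass, forcing $\tv(\calA(S,n),\calD)\geq|\calD(B)-\calA(S,n)(B)|$ to exceed $\alpha$ for sufficiently large $n$ and every rational seed $S$.

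To close the outer quantifier over arbitrary algorithms $\calA$ one must restrict to SKG-style algorithms (those realizing their output distribution via Kronecker powers of the seed, possibly followed by structure-preserving post-processing). For an information-theoretic variant of the conjecture, one would instead invoke a cardinality/continuity argument: since $\rationals_+^{s\times s}$ is countable and the distributions obtained from each seed vary continuously in the one-dimensional parameter $n$, the closure of all achievable distributions under $\tv$ is a meager subset of $\calP(\Omega,\calF)$, so $\calD$ can be taken outside its $\alpha$-neighborhood.

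The main obstacle is the second step: exhibiting a \emph{single} $\calD$ that witnesses the lower bound for every rational seed \emph{simultaneously} and \emph{uniformly in} $n$. The valley argument works seed by seed, but the valley locations depend on the rational entries of $S$, so to make the bound uniform one must show that these locations cannot collectively tile the degree axis finely enough to smooth out the witness mass of $\calD$. This will probably require a Diophantine-style argument bounding how densely the discrete slice parameters $(\alpha,\beta)$ arising from $3\times 3$ seeds --- and their $2\times 2$ analogs --- can approximate arbitrary logarithms, combined with a careful choice of $\calD$ whose support intersects every valley in a quantitative way.
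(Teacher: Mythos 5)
You should first note that the paper does not prove this statement: it is stated as Conjecture~\ref{conj:2} and is supported only by the experiments of Section~\ref{sec:exp} and by the exponential-tail/oscillation results (Theorem~\ref{thm:oscillations} for $3\times 3$ seeds, and the cited Seshadhri--Pinar--Kolda analysis for $2\times 2$ seeds). So there is no proof in the paper to compare against; your proposal is an attempt to upgrade the paper's heuristic evidence into an actual argument, and the route you choose --- use the slice asymptotics to locate ``valley'' degrees where any Kronecker-generated distribution has exponentially small mass, then pick a smooth power-law $\calD$ carrying constant mass there and read off a TV lower bound from that witness event --- is exactly the mechanism the paper gestures at informally. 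In that sense the approach is the right one, but it remains a sketch with real gaps, two of which you partly identify yourself.

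First, the quantifier structure. As written the conjecture reads $\forall \alpha\,\forall S\,\exists \calD\,\forall \calA$, so $\calD$ is allowed to depend on the seed $S$; your ``main obstacle'' of finding a single $\calD$ that defeats every rational seed simultaneously is therefore not required, and you are making the problem harder than stated (though a uniform $\calD$ would of course be a stronger result). Conversely, the universal quantifier over $\calA$ is fatal to any literal proof: since $\calA$ in Definition~\ref{def:stat} is just a map to $\calP(\Omega,\calF)$, the constant algorithm that ignores $S$ and outputs $\calD$ falsifies the statement for every $\calD$. You flag this and propose restricting to ``SKG-style'' algorithms, but without a precise definition of that class (e.g., algorithms whose output law is the degree distribution of the graph sampled from $S^{\otimes\ell}$, or from mixtures of Kronecker powers of $S$) the claim you would be proving is not pinned down, and the valley argument only controls that restricted class. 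Second, the quantitative step is missing: Lemma~\ref{lem:oscillations2times2} and Theorem~\ref{thm:oscillations} are per-slice, asymptotic statements valid under side conditions ($d=o(\sqrt{n})$, $p_r\le 1/\sqrt{m}$) and at the normalizations $\lambda=\Lambda=1$, $\tau=e$. To get $\tv>\alpha$ you must (i) define the valley set $B$ relative to the actual $\tau$ and $\lambda$ of the given seed, (ii) aggregate over the mixture of slices with their binomial (resp.\ multinomial) vertex counts to bound the total SKG mass on $B$, and (iii) show $\calD(B)$ exceeds that bound by more than $\alpha$ uniformly in $n$. None of these steps is carried out, and (ii) in particular is where the argument could fail: the mixture over slices can in principle fill in the per-slice valleys, and ruling that out is precisely the hard content of the Seshadhri--Pinar--Kolda lower bounds. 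Your closing cardinality/meagerness remark does give a correct one-line proof of a weaker, purely existential version (a countable family of distributions cannot be $\tv$-dense), but it says nothing about the ``for all $\alpha>0$'' clause with a fixed witness $\calD$, nor about the structural claims the paper actually cares about.
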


In other words, to
generate a graph with degree distribution in \textit{any} possible
identifiability class,
an SKG algorithm cannot do so with only a $2\times 2$ set of seeds or a
$3\times 3$ set of seeds.
This observation is supported by empirical evidence and a new
algorithm that can combine different seed sets.
In Conjecture~\ref{conj:2},
renormalized seed sets refer to sets for which
all entries are divided by the total sum so
that it corresponds to a probability vector.

In Section~\ref{sec:rpskg},
we come up with a new proposal to eliminate
oscillations in any degree distribution generated by stochastic
Kronecker multiplications.  The $\rpskg$ algorithm works by
using a $2\times 2$ seed matrix
and $3\times 3$ seed matrix to generate the graph.
We show that this eliminates oscillations in the degree
distributions (unlike SKG). Also, the Noisy SKG proposal
implicitly combines a $2\times 2$ seed matrix with another seed
matrix to
remove oscillations in the degree distribution
~\citep{Seshadhri:2013}.

In this work, we focus on analyzing properties of SKG-generated
graphs like
connectedness, the existence of isolated vertices, and the
degree distributions.
As we show, there might be phase transitions in SKG
identifiability via a threshold value~\citep{Brennan2019PhaseTF, Ding0XY20, DingWXY21}.
Furthermore, the separations might be able to import
some recent understanding in the theory of
stochastic block models~\citep{AbbeS15, Abbe17,Abbe18}, a study
we leave for future work.

\subsection{Overview of Techniques}

\subsubsection{First and Second Moment Methods}

To prove Theorems~\ref{thm:isolated} and
~\ref{thm:connected}, we construct a sequence of pair of
graphs for which their
degree distributions satisfy a statistical distance upper bound
that is a function of $\alpha(n)\geq 0$. Then we consider the
event that a vertex is isolated. We prove that, over the entire graph,
for one sequence of graphs, 
$\pr[\exists\,\,\text{isolated nodes}]$ goes to 0 and for the other 
sequence, it goes to 1. 
For one sequence of graphs, we only consider the first moment of the
random variable that counts the number of isolated vertices. For the
second sequence of graphs, we consider the second moment of the
random variable.

\subsubsection{The \rpskg algorithm}

Algorithm~\ref{alg:rpskg} is able to generate a graph without
oscillations (as detailed in Section~\ref{sec:exp}).
The algorithm mixes the use of $2\times 2$ with $3\times 3$ seeds.
A subroutine samples a $3\times 3$ seed from the uniform
distribution over the $2\times 2$ seed. We show that this
sampling problem corresponds to solving a degree-4 optimization problem
in the variables of the $2\times 2$ seed.
Also, even though Algorithm~\ref{alg:rpskg} is written in a serial
form for clarify sake, it is easy to see that it is
easily parallelizable.

\subsection{Why is a New Approach Necessary and Important?}

Graph analysis remains important and relevant.
On one hand, we need to design efficient algorithms to infer properties
from these large ``real-world'' graphs. As a result, a
lot of research has been done to enable generation of graph models that realistically model
``real-world'' graphs.
The celebrated work of Erd\H{o}s and R\'{e}nyi
~\cite{ErdosRenyi:60, Gilbert59}
for fast construction of random graphs just by
inserting each edge with probability $p > 0$ does not seem to be a good solution in this context,
since social graphs have many properties (for example triadic closure and homophily) that
random graphs don’t model well. Thus, a more
realistic model is needed. The model that is
usually used in practice
is the Stochastic Kronecker Model (SKG) proposed in 2010 by Leskovec et al.~\cite{Leskovec:2010} (further
discussed in Section~\ref{sec:skg}).
This model has been chosen to create graphs for the
Graph500 supercomputer benchmark
~\cite{Graph500}. SKG models are used for
various reasons: it has very few parameters and
can generate graphs fully in parallel --- on a
per-edge basis.

However, the SKG model has several
problems that differentiate
its generated graphs from real network structures
(social networks, for instance).
In particular, Seshadhri et al.
~\cite{Seshadhri:2013} proved that SKG models
\textit{cannot} generate a power law or
log-normal distribution. Graphs generated by the
Stochastic Kronecker Graph model
is most accurately characterized as fluctuating
between a log-normal distribution and an exponential
tail. Figure~\ref{fig:intro0} is an example of
a plot (on a log-log scale) of the degree
distribution of a graph generated by SKG. This
figure clearly shows unwanted oscillations in the
degree distribution.
The Noisy Stochastic Kronecker Graph (NSKG)
model attempts to fix some of the problems
with the SKG model. The
main problem NSKG fixes is that it gets rid of
the oscillations in the degree distributions
of generated graphs. However, NSKG generates multiple
random numbers that
scales proportionally (logarithmically)
with the
number of edges generated. These random numbers are
additional parameter artifacts
that must be recorded for reproducibility sake, thus
making NSKG unwieldy.

In this paper,
we present a new model, the
Relative Prime Stochastic Kronecker Graph (\rpskg)
model, that can generate graphs
with log-normal degree distributions that do not
have unwanted oscillations (unlike SKG) while adding no
additional parameters (unlike NSKG).
In addition, we present theoretical and experimental
results about the \rpskg model that show its
superiority to SKG.

The \textit{main}
contribution of our paper, though, is in presenting
criteria (via definitions of statistical and computational
identifiability) to classify and separate SKG models.
The implications of the conjectures show
the necessity of the classification of SKG models.
The \rpskg model is a family of algorithms that is adaptive to
interpolation of seeds of different dimensions. We believe this
model and the $\rpskg$ algorithm warrants further study.

\subsection{Additional Notation and Background}
\label{sec:additionalback}

In Section~\ref{sec:related},
we dive into contextual work related to our research
but first we introduce some preliminary notation that
will be useful
throughout this paper.

The SKG, NSKG, and \rpskg models all belong to a family
on Kronecker models that rely on a non-traditional
matrix operation, the Kronecker product
(otherwise known as the Tensor product).
The Kronecker product of two matrices
$A \in \mathbb{R}^{m \times n}$, $B \in \mathbb{R}^{p \times q}$ is an $mp \times nq$ matrix, defined as follows:

\[A \otimes B = 
  \begin{bmatrix}
    a_{11}B & \cdots & a_{1n}B\\
    \vdots & \ddots & \vdots\\
    a_{m1}B & \cdots & a_{mn}B\\
  \end{bmatrix}
  = \begin{bmatrix}
    a_{11}b_{11} & \cdots & a_{1n}b_{1q}\\
    \vdots & \ddots & \vdots\\
    a_{m1}b_{p1} & \cdots & a_{mn}b_{pq}\\
  \end{bmatrix}
\]

Also common to the three models is a
\textit{initiator matrix}, which is usually
a $2\times 2$ matrix. 
The Kronecker power of this matrix is then
computed until we obtain a (usually large) matrix
with dimensions of an adjacency matrix of the
the graph we wish to generate. The generated matrix
is a probability matrix (all values sum to one if the matrices involved in the product also have values that sum to one) which
we can sample from to generate edges.
Thus, we use this matrix to generate edges for a graph.

In practice, we \textit{do not} actually generate
this large matrix but simulate its generation for use
in sampling on a per-edge basis.
In addition to the initiator matrix, the models
require a parameter for the number of
Kronecker products performed. In both NSKG and SKG,
$\ell$ is the
number of Kronecker products.
As a consequence, the number of
nodes generated depends on
$\ell$. For both NSKG and SKG,
$2^\ell$ nodes are generated. On the other hand,
for \rpskg, the number of nodes generated is
$3^k\cdot 2^\ell$ where
$k$ is the number of $3\times 3$ seeds used.
Refer to Algorithm~\ref{alg:rpskg} for a
description of the use of the seeds.
Note that, via standard rounding techniques, the generation
process can be adjusted to generate any number of nodes.

The SKG is not only a theoretical model; it is heavily used in practice as well. For example, the Graph500 supercomputer benchmark uses SKG with the
following initiator matrix

\begin{equation}
\label{eq:Graph500}
T = \begin{bmatrix}
    t_1 & t_2\\
    t_2 & t_3\\
  \end{bmatrix}
  = \frac{1}{16}
  \begin{bmatrix}
    9 & 3\\
    3 & 1\\
  \end{bmatrix}
\end{equation}

\noindent and generates graphs with $m = 16\cdot 2^\ell$
edges, $n=2^\ell$ nodes where
$\ell\in\{26, 29, 32, 36, 39, 42\}$.

Notice that the Graph500
initiator matrix in Equation~\ref{eq:Graph500}
is symmetric. Often times, in other situations,
this matrix is symmetric. We exploit this
property in our theoretical analysis. We now precisely define
the stochastic Kronecker graph model:

\begin{definition}[Stochastic Kronecker Graph]
A \textbf{Stochastic Kronecker Graph} (SKG) is defined by
an integer $\ell$ and a matrix
$S\in\reals^{s\times s}$ for some $s\in\naturals$. 
$S$ is the base or initiator matrix.

The graph generated will have $n = s^\ell$ vertices, where
each vertex is labeled by a
unique bit vector of length $\ell$. Given two
vertices $u$ (with label $u_1u_2\ldots u_\ell$) and
$v$ (with label $v_1v_2\ldots v_\ell$) the probability of
the edge $(u, v)$ appearing in the resulting graph is
$\prod_i^\ell S[u_i, v_i]$, independent of the presence of other
edges.
\label{def:skg}
\end{definition}

Often, the Stochastic Kronecker Graph definition (Definition~\ref{def:skg}) is assumed to rely on 
$2\times 2$ seed matrices to generate the graph. We generalize this
notion to allow for seeds of any dimension. Also, although the
number of edges in the generated graph is often variable, we can
also impose a restriction on the number of edges. e.g.,
by sampling from a truncated distribution like the truncated normal
and using rounding techniques.
As we will discuss, Definition~\ref{def:skg} is inspired by the
Recursive Matrix model~\citep{Groer:2011}.

\section{Related Work}
\label{sec:related}

In this section, we survey some graph generation
models and will focus on
discussing SKG, NSKG,
and Multiplicative Attribute Graph models. Also,
we discuss some of the work
done in the validation of graph generation models.
See works of Margo~\cite{Margo17} for
a more complete survey. 

\subsection{Stochastic Kronecker Graphs}
\label{sec:skg}

In a seminal paper, Leskovec et al.~\cite{Leskovec:2010} define the Stochastic Kronecker Graph (SKG) model
that can be very effective in capturing the characteristics of real social networks (for example,
small diameters and heavy-tailed degree distributions).

The SKG model is a generalization of the Recursive Matrix model introduced by Chakrabarti et al.~\cite{Chakrabarti:2004}.
To generate a Stochastic Kronecker graph, they use a single $K\times K$ initiator matrix $T$.
Often $K = 2$, a model equivalent to R-MAT~\cite{Groer:2011}. 
Each of the $K^2$
entries in the initiator matrix
represents the probability of generating an edge for a specific source and target vertex regions
of the graph. For example, for
the Graph500 benchmark $2\times 2$ seed matrix
(Equation~\ref{eq:Graph500}), the seed values mean that
there is roughly a $56\%$ chance of inserting an edge in the top-left quadrant of the generated
graph.
The resulting graph can be naively generated by the Kronecker product as follows:

$$
P = T^{\otimes\ell} = T\otimes T\otimes\ldots\otimes T
= T^{\otimes\ell - 1}\otimes T
$$

In the resulting matrix, $P_{ij}$ represents the probability that there is an edge connecting the
nodes $i$ and $j$. In practice, the Kronecker is simulated by recursively choosing a sub-region of
matrix $T^{\otimes r}$
(after $r\leq\ell$ steps) until we descend on a single cell of $P$ and then place an edge.
To further illustrate the generation process, suppose we use a $2\times 2$ initiator matrix and wish
to generate graph on $m = 16\cdot 2^\ell$
(often used in practice---for Graph500 for example). Then
for every edge, we divide the adjacency matrix into four quadrants, and choose one of them
with the corresponding probability that the
$2\times 2$ seed matrix suggests. Once a quadrant is chosen,
this procedure is repeated recursively. Intuitively, if we consider the indices of the vertices to
be binary numbers, the $r^{th}$ step of this process fixes the $r$th most important bit of the two
endpoints of the edge.
Note that, using this approach some of the edges may be inserted multiple times. According
to experiments, however, this occurs rarely --- less that 1\% of the total number of edges.
One of the most important characteristics of this model, that makes it widely applicable in
practice, is that it can generate graphs fully in parallel. Because of this property, the Graph500
supercomputer benchmark uses SKG by default to generate large graphs.
Although the SKG model has been influential and is widely used in practice, there are
several problems with the model:
\begin{enumerate}
\item The degree distribution of the generated graph does not have the wanted heavy-tail
behavior but there are large oscillations instead. First, Gr\H{o}er et al.~\cite{Groer:2011}
proved that the
degree distribution behaves like a sum of Gaussians and later Seshadhri et al.~\cite{Seshadhri:2013} showed
that it oscillates between a log-normal distribution and an exponential tail.
\item A very large fraction of the vertices of every graph is isolated and thus the graphs created
with this model have a much smaller size and are much more dense than expected.
\item The max core numbers in the generated graphs are extremely small --- much smaller than those
corresponding of real graphs.
\end{enumerate}

Taking these problems into consideration, one would like to improve this model in order to
keep the easy generation and parallelization but make behave it more like real-life graphs. In
Section \ref{sec:nskg}, we briefly discuss the fix that was proposed by Seshadhri et al.~\cite{Seshadhri:2013} and
note some potential problems. In Section
~\ref{sec:rpskg},
we come up with a new proposal to eliminate the
oscillations in the degree distribution and
subsequently validate our proposal theoretically
and experimentally.

Mahdian et al.~\cite{Mahdian:2007} show
necessary and sufficient
conditions for stochastic Kronecker graphs to be connected or to have giant components of size $\Theta(n)$ with
high probability. They study the connectivity
and searchability
probabilities of both random graphs
(a generalization of the Erd\H{o}s-R\'{e}nyi model)
and stochastic Kronecker graphs.

Mihyun Kang et al.~\cite{KangKochMakai:2015}
also examine properties of stochastic
Kronecker graphs but focus more on
on showing that the graphs do not feature
a power law degree distribution nor a log-normal
degree distribution.

\subsection{Noisy Stochastic Kronecker Graphs}
\label{sec:nskg}

Seshadhri et al.~\cite{Seshadhri:2013}
provide rigorous proofs for many of the problems
with the SKG model (stated previously
in Section~\ref{sec:skg}).
In addition, Seshadhri et al.
provide a new model --- the Noisy SKG ---
that fixes some
of the problems with the original SKG model. The idea is to add noise in every step of
the computation (i.e., of the edge creation) so that edges are not sampled from the actual
distribution created by the initial
$2\times 2$ seed matrix,
but rather from a noisy version of it. At each
step $r\leq \ell$ of the recursive generation of the Kronecker product, $\mu_r$ is the noise factor chosen
from the uniform distribution with range $[-b, +b]$
where $b \leq \min(t_2,(t_1 + t_4)/2)$
is the noise
parameter. The initiator matrix $T$ is transformed on a per-level basis to become:

\[
 \begin{bmatrix}
    t_1 - \frac{2\mu_rt_1}{t_1 + t_4} &
    t_2 + \mu_r\\
    t_3 + \mu_r &
    t_4 - \frac{2\mu_rt_4}{t_1 + t_4}\\
  \end{bmatrix}
\]

Although, the generated graph of NSKG models
are an improvement over SKG models, there are
still some problems with the NSKG model:
\begin{enumerate}
\item Adding too much noise could
destroy some ``real-world'' properties of the graph. In a sense, adding noise is
equivalent to superimposing an Erd\H{o}s-R\'{e}nyi graph on the Kronecker graph. The larger the
noise, the quicker the generated graph converges to a random graph.
\item During and after graph generation, we have to keep track of the $\ell$ random $\mu_r$
$(1\leq r\leq \ell)$
used for noise thus adding to our already cumbersome parameter list.
\item The large amount of noise added could sometimes make the generated graphs significantly
different from others generated using the same initiator matrix. Therefore, to make safe
conclusions about experiments based on the generated graphs, users must generate many
such graphs thereby adding to the computation cost (e.g., supercomputing cost for Graph500).
\item Related to our previous point, since the noise is white (i.e., its expectation is 0) if we
create several graphs, on average we expect that our results would
not be noisy, and thus the
oscillations in the degree distribution will reappear. This means that although the degree
distribution of every graph is log-normal, the degree distribution of the whole family of
graphs is still oscillating between a log-normal and an exponential tail.
\end{enumerate}

We would like to note that although the NSKG model was communicated to the Graph500
committee, it is not used in the standard and is in fact compile-time disabled by default. This might
imply that the Graph500 committee had some concerns (probably similar to ours) about the
NSKG model.

\subsection{Multiplicative Attribute Graph Model}

Kim and Leskovec~\citep{KimL12} 
present the \textit{Multiplicative Attribute Graph} (MAG)
model which
is very similar to SKG. MAG models are essentially a generalization of SKG where
each level (for generating a single edge)
may have a different matrix $T$
(denoted $\Theta$ in their paper).
The $\Theta$ matrices are the link-affinity
matrices that be used to model link structure
between nodes.
Kim and Leskovec
show that certain configurations of these matrices
can lead to power-law or log-normal distributions.

However, MAG models are more complex than the family
of Stochastic Kronecker Graph models
as each node stores attributes and
the different link-affinity matrices add complexity
to the model specification.

\subsection{Validation of Degree Distributions}

We agree with
Mitzenmacher ~\cite{Mitzenmacher:2005} with the sentiment that in studying
power-law (and related) distributions of graph models we should
not only \textit{observe}, \textit{interpret}, and
\textit{model} but also aim to \textit{validate} and
\textit{control} our research.
As a consequence, we do not just present our algorithm but show
through experiments (in Section~\ref{sec:exp})
that graphs generated by
\rpskg approximately follow a log-normal distribution.

According to Newman~\cite{Newman:2005}, distributions
that follow log-normal typically arise when multiplying
random numbers. Mitzenmacher~\cite{Mitzenmacher:2003}
uses the term
\textit{multiplicative process} to describe this
model. The
log of the product of a large number of random numbers is
the sum of the logarithms of those same random numbers,
and by the central limit theorem such sums have a normal
distribution essentially regardless of the distribution of
the individual numbers.

Furthermore, Mitzenmacher describes some
generative models for power law and log-normal
distributions~\cite{Mitzenmacher:2003}.
Specifically, he describes how the preferential
attachment models leads to a power law distribution.
In preferential attachment models, new objects
tend to attach to popular objects. In the case of
the Web graph, new links tend to go to pages that
already have links.

\subsection{Relation to the Chung-Lu Model}

We also note that the SKG model is related to other graph models. We
cannot possibly exhaustively relate to every other possible graph
generation model but will focus on one other:
the Chung-Lu model. In~\citep{SeshadhriPK12},
Seshadhri, Pinar, and Kolda explore similaries between SKG models and
the Chung-Lu model.

Let $d_1, d_2, \ldots, d_n$ be a sequence of $n$ in-degrees and
$d_1', d_2', \ldots, d_n'$ be a sequence of $n$ out-degrees such that
$\sum d_i = \sum d_i' = m$.  Then consider the probability matrix
$P_{CL}$ for
making $m$ edge insertions:
the $(i, j)$th entry is $d_i d_j'/m^2$.
In the Chung-Lu model, $P_{CL}$ is used to generate the graph and
can be used to model any degree distribution~\citep{ChungLu02}.

\section{Existence of Isolated Vertices and Connected Components}
\label{sec:isolated}

Conjecture~\ref{conj:1} separates statistical and computational
SKG identifiability. In particular, as we will show, if the
(poly-time) algorithm cannot find a seed that will follow
a specific degree distribution, the algorithm cannot ensure
certain properties of the graph. Specifically,
even if a graph---which follows
a certain degree distribution up to statistical distance---
without any isolated
vertices exists, the algorithm will be unable to generate such
a graph.

There could be other ramifications of the separations of statistical
and computational identifiability in terms of graph properties. We first
consider a most obvious one --- existence of isolated vertices.

\subsection{Isolated Vertices}

Our work shows a threshold phenomena for the existence of
isolated nodes. The function $\alpha(n)\geq 0$ can be thought of
as one with arbitrarily slow growth to infinity. For example,
$\alpha(n) = \log \log \log n$.

\begin{theorem}

Let $\calA$ be a polynomial-time algorithm that can generate
a graph, via a seed $S$, which follows a particular
degree distribution for any node set size
$n\in\naturals$.

Then assuming Conjecture~\ref{conj:1},
there exists a degree distribution $\calD$ and
$\beta(n)\geq 0$ for which
$\calA$ will be unable to ensure that the generated graph
has no isolated vertices while remaining $\beta(n)$ SKG statistically
identifiable, in the limit.

\label{thm:isolated}
\end{theorem}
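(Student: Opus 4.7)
The plan is to combine Conjecture~\ref{conj:1} with a concrete construction of two SKG seed sequences whose induced degree distributions are indistinguishable in total variation yet qualitatively opposite with respect to isolated vertices. Intuitively, if such a pair exists, then being $\beta(n)$ close to $\calD$ does not pin down whether the realized graph has isolated vertices, and Conjecture~\ref{conj:1} prevents any polynomial-time $\calA$ from reliably selecting the ``good'' seed out of the exponentially large seed space. I would take $\calD$ and $\alpha(n)$ to be the witnesses from Conjecture~\ref{conj:1} and set $\beta(n) := \alpha(n)$ (or a constant multiple thereof).

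First I would exhibit two seed sequences $\{S_n^{(1)}\}_{n \in \naturals}$ and $\{S_n^{(2)}\}_{n \in \naturals}$ whose induced degree distributions $\calD_n^{(i)} := \calA(S_n^{(i)}, n)$ both satisfy $\tv(\calD_n^{(i)}, \calD) \leq \beta(n)$. The first seed $S_n^{(1)}$ is chosen with all entries bounded away from zero, so that the marginal probability that a fixed vertex $v$ is isolated, $\prod_{u \neq v}\bigl(1 - \prod_i S^{(1)}[v_i, u_i]\bigr)$, decays faster than $1/n$; a first-moment (Markov) bound on $X_n := \sum_v \ind\{\deg(v)=0\}$ then gives $\pr[X_n \geq 1] \to 0$. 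The second seed $S_n^{(2)}$ agrees with $S_n^{(1)}$ up to a perturbation of order $\beta(n)/n$ in a single entry, tuned so that the per-vertex isolation probability is $\Theta(n^{-1+\eps})$ for some small $\eps > 0$; then $\E[X_n] \to \infty$, and a Paley--Zygmund / second-moment argument yields $\pr[X_n \geq 1] \to 1$. Since the two seeds differ only by a small perturbation, the degree distributions agree outside the degree-$0$ bin and differ there by $O(\beta(n))$, giving $\tv(\calD_n^{(1)}, \calD_n^{(2)}) \leq \beta(n)$.

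Now assume for contradiction that $\calA$ is simultaneously polynomial time, $\beta(n)$ SKG statistically identifiable with $\calD$, and guaranteed to avoid isolated vertices in the limit. Identifiability asks that there exist a seed $S$ with $\tv(\calA(S,n), \calD) \leq \beta(n)$ for all $n$; by the construction above the set of such seeds includes at least the neighborhoods of both $S_n^{(1)}$ and $S_n^{(2)}$. Conjecture~\ref{conj:1}, applied to $\calD$ and $\alpha(n)=\beta(n)$, says no polynomial-time procedure can single out a seed achieving the TV bound uniformly in $n$. Thus the seed $\calA$ commits to in the limit cannot be steered toward $S_n^{(1)}$ specifically; it may equally well behave like $S_n^{(2)}$, in which case $\pr[X_n \geq 1] \to 1$, contradicting the no-isolated-vertex guarantee. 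Hence no polynomial-time $\calA$ can meet both requirements, proving the theorem.

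The hardest step will be the second-moment computation for $S_n^{(2)}$. In the SKG model the indicators $\ind\{u \text{ isolated}\}$ are correlated across $u$ because the edge probabilities $\prod_i S[u_i, v_i]$ share factors whenever $u$ and $v$ agree on coordinates. I plan to stratify pairs $(u,v)$ by their Hamming distance in the $\ell$-bit label space, exploit the multiplicative factorization of SKG edge probabilities to factor $\pr[u, v \text{ both isolated}]$ as a product over the $\ell$ coordinates, and show via a careful sum that the covariance contribution is $o(\E[X_n]^2)$. The first-moment bound for $S_n^{(1)}$ and the TV-closeness of both $\calD_n^{(i)}$ to $\calD$ should follow routinely by choosing the entries of $\calD$ so that the degree-$0$ mass sits between the masses induced by the two seeds, with the remaining bins controlled by a coupling that differs only on a set of mass $O(\beta(n))$.
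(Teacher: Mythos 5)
Your overall architecture matches the paper's: pick a target distribution, exhibit two seeds whose induced degree distributions are within $\beta(n)$ of it in total variation but which sit on opposite sides of the isolated-vertex threshold (first moment plus Markov for the subcritical count, second moment plus Chebyshev/Paley--Zygmund for the supercritical one), and then invoke Conjecture~\ref{conj:1} to argue that a polynomial-time $\calA$ cannot steer itself to the seed that avoids isolated vertices. Where you diverge is in how the threshold dichotomy is established. The paper deliberately reduces to \emph{homogeneous} Bernoulli graphs: Lemma~\ref{lem:isolated} assumes the Kronecker power $T_i^{\otimes\ell}$ yields a single link probability $p_n$ for every pair, at which point the dichotomy is the classical Erd\H{o}s--R\'enyi isolated-vertex threshold at $p_n = (\log n \pm \alpha(n))/n$, and the covariance computation is the easy exchangeable one ($\pr(E_i,E_j) = (1-p_n)^{2n-3}$). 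You instead propose to run the second-moment argument for genuinely heterogeneous SKG edge probabilities $\prod_i S[u_i,v_i]$, stratifying vertex pairs by Hamming distance and factoring the joint isolation probability over coordinates. That is a strictly harder and more honest version of the lemma (the paper's homogeneity assumption is a real restriction on which seeds it applies to), but you explicitly leave it as a plan rather than carrying it out, and it is precisely the step you yourself flag as hardest.

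Two concrete concerns with the unexecuted part. First, the claim that perturbing a \emph{single seed entry} by $O(\beta(n)/n)$ moves the per-vertex isolation probability from $o(1/n)$ to $\Theta(n^{-1+\eps})$ does not account for the fact that SKG edge probabilities are products over $\ell = \Theta(\log n)$ levels: a perturbation $\delta$ in one entry of the $s\times s$ seed changes each edge probability multiplicatively, by a factor depending on how many coordinates of $(u,v)$ hit that entry, so the induced change in $\sum_{u\neq v}\pr[(u,v)\in E]$ is not simply additive of order $\delta$ and must be tracked through the exponent. The paper avoids this by perturbing the \emph{resulting link probability} $p_n$ directly by $2\alpha(n)/n$, not a seed entry. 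Second, the assertion that the two degree distributions ``agree outside the degree-$0$ bin'' is not right as stated --- changing edge probabilities shifts mass across all degree bins --- though a coupling argument can still bound the total variation by the expected number of edges that differ, which is $O(\alpha(n))$ under the paper's scaling. Neither issue is fatal, but both need to be resolved before your version of the threshold lemma is a proof rather than a proof strategy; if you are willing to adopt the paper's homogeneity reduction, the rest of your argument goes through essentially verbatim.
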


\begin{proof}

Without loss of generality, we will consider the uniform degree
distribution since for any random variable $X$ with CDF $F_X$,
we can transform to a uniformly distributed random variable
via $F_X(X)$.

Consider the uniform degree distribution generated by a 
homogeneous Bernoulli graph: each graph sequence $G_n$
has probability of $p_n$ of any two nodes having an edge.
Then by Lemma~\ref{lem:isolated}, if $p_n$ is above some 
threshold, 
$\pr(\exists\text{ isolated nodes})\rightarrow 0$.
If $p_n$ is below some threshold,
$\pr(\exists\text{ isolated nodes})\rightarrow 1$.

By the conjecture, since the algorithm is bounded by
a polynomial-time runtime, it cannot distinguish between
two seeds $T_1$ and $T_2$, exactly
one of which generates a graph without isolated nodes
(in the limit). Moreover, the edge link probabilities of
the two graphs are chosen to be within $2\alpha(n)/n$ to
induce closeness in degree distributions.
In particular, the TV distance between the distributions of the two
generated graphs is at most $\beta(n) \leq \alpha(n)$.

As a result, by Lemma~\ref{lem:isolated}, the algorithm $\calA$
cannot ensure the generated graph is SKG statistically 
identifiable and has no isolated vertices.

\end{proof}

\begin{lemma}

Fix $s\in\naturals$ and let $T_1, T_2\in\reals^{s\times s}$ be two SKG seeds.
Let $\alpha(n)\geq 0$ be a sequence that
depends on $n$.
Then let $T_1^{\otimes \ell}$ produce a $n$-node graph $G_1$ with
(homogeneous) link probability of 
$p_n \geq \frac{\log n + \alpha(n)}{n}$.
And let $T_2^{\otimes \ell}$ produce a $n$-node graph $G_2$ with
(homogeneous) link probability of 
$p_n \leq \frac{\log n - \alpha(n)}{n}$.

Then
$$
\pr[\exists\text{ isolated nodes in }G_1]\rightarrow 0,
$$
$$
\pr[\exists\text{ isolated nodes in }G_2]\rightarrow 1.
$$

Furthermore,
$$
\E[\text{isolated nodes in }G_1] \leq e^{p_n}e^{-\alpha(n)}.
$$
\label{lem:isolated}
\end{lemma}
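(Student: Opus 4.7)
Let $X := \sum_{v=1}^n \ind[v\text{ is isolated in }G]$ count the isolated vertices. Because the link probability is homogeneous and edges are independent, $\pr[v\text{ is isolated}] = (1-p_n)^{n-1}$, so
\[
\E[X] = n(1-p_n)^{n-1}.
\]
The plan is to treat $G_1$ by a first-moment/Markov argument and $G_2$ by a second-moment/Chebyshev argument, exactly the strategy announced in the overview of techniques.

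First I would handle $G_1$. Using $\log(1-p_n) \leq -p_n$,
\[
\E[X] \leq n e^{-p_n(n-1)} = e^{p_n}\cdot n e^{-p_n n} \leq e^{p_n} e^{-\alpha(n)},
\]
where the last inequality substitutes the hypothesis $p_n n \geq \log n + \alpha(n)$. This already gives the ``furthermore'' bound on $\E[\text{isolated in }G_1]$. Markov's inequality then yields $\pr[X \geq 1] \leq \E[X] \to 0$ under the standing assumption (used elsewhere in the paper) that $\alpha(n)$ grows arbitrarily slowly to infinity; note $p_n = O(\log n / n)$ keeps the prefactor $e^{p_n}$ bounded.

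For $G_2$, I would apply the second moment method. From $\log(1-p_n) = -p_n - O(p_n^2)$ and $p_n \leq \log n/n$, one gets $\E[X] \geq (1-o(1))\,e^{\alpha(n)} \to \infty$. For the variance, the critical observation is that a pair $(u,v)$ with $u\neq v$ is jointly isolated iff none of the $2n-3$ distinct potential edges touching $\{u,v\}$ is present (the edge $(u,v)$ is shared), so
\[
\E[X^2] = n(1-p_n)^{n-1} + n(n-1)(1-p_n)^{2n-3}.
\]
Subtracting $\E[X]^2 = n^2(1-p_n)^{2n-2}$ and factoring gives
\[
\frac{\Var(X)}{\E[X]^2} = \frac{1}{\E[X]} + \frac{np_n - 1}{n(1-p_n)} = o(1),
\]
because $np_n \leq \log n$, $p_n = o(1)$, and $\E[X]\to\infty$. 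Chebyshev's inequality then yields $\pr[X = 0] \leq \Var(X)/\E[X]^2 \to 0$, so $\pr[\exists\text{ isolated node in }G_2]\to 1$.

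The main technical point will be the variance bookkeeping: the diagonal ($u=v$) contribution in $\E[X^2]$ reproduces $\E[X]$, while the off-diagonal contribution must correctly discount the single shared potential edge between $u$ and $v$ (hence the $2n-3$ exponent, not $2n-2$); getting this exponent right is what produces the clean $(np_n - 1)/(n(1-p_n))$ expression and hence the $o(1)$ bound. Beyond this, I do not expect serious obstacles, since the lemma is essentially the classical Erd\H{o}s--R\'enyi isolated-vertex threshold at $p = \log n/n$ specialized to the homogeneous case generated by a constant-entry seed.
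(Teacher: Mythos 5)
Your proposal is correct and follows essentially the same route as the paper: a first-moment/Markov bound for $G_1$ (including the identical $e^{p_n}e^{-\alpha(n)}$ estimate) and a second-moment/Chebyshev argument for $G_2$ with the same joint-isolation probability $(1-p_n)^{2n-3}$. Your variance bookkeeping is in fact exact where the paper settles for the upper bound $\Var(X)\leq \E[X]+\E[X]^2\,p_n/(1-p_n)$, but both yield the same $o(1)$ conclusion, and you correctly flag the implicit assumption $\alpha(n)\to\infty$ that the paper also invokes inside its proof.
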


\begin{proof}

Consider the $n$ nodes of a graph $G_n$:
$v_1, v_2, \ldots, v_n$.
Let $E_i$ denote the event that any vertex $v_i$ is isolated.
The the number of isolated vertices in $G_n$ is
$$
X_n = \sum_{i=1}^n\ind[E_i].
$$
The probability of the event $E_i$ is
$(1-p_n)^{n-1}$ (using independence and
since w.p. at least $1-p_n$, any vertex is
not connected to any others).
Then
\begin{equation}
\E[X_n] = 
\sum_{i=1}^n\E[\ind[E_i]]
= \sum_{i=1}^n\pr[E_i]
= n(1-p_n)^{n-1}.
\label{eq:linearity}
\end{equation}

Using the taylor expansion of
$\log(1-t) = -\sum_{k=1}^\infty\frac{t^k}{k}$, it can be
derived that
\begin{equation}
\log(1-t) \leq -t,\quad\forall t\in[0, 1).
\label{eq:log1}
\end{equation}
Furthermore,
\begin{equation}
\log(1-t) \geq -t-t^2,\quad\forall t\in[0, 1/2].
\label{eq:log2}
\end{equation}

Case (i):
For graphs of type $G_1$, we have
$p_n \geq\frac{\log n + \alpha(n)}{n}$ for some
$\alpha(n)\rightarrow\infty$. Then by 
Equation~\ref{eq:linearity} and
Equation~\ref{eq:log1}:
\begin{align}
\E[X_n] &= e^{\log n + (n-1)\log(1-p_n)} \\
&\leq e^{\log n - (n-1)p_n} \\
&= e^{p_n}e^{\log n - np_n} \\
&\leq e^{p_n}e^{-\alpha(n)}.
\end{align}

So,
$\E[\text{isolated nodes in }G_1] \leq e^{p_n}e^{-\alpha(n)}$
and by Markov's inequality,
$$
\pr[\exists\text{ isolated nodes in }G_1] = \pr(X_n
\geq 1) \leq \E[X_n]\rightarrow 0.
$$

Case (ii):
We will consider graphs of type $G_2$ and apply Chebyshev's
inequality as follows:
\begin{align}
\pr(X_n = 0) &\leq \pr(X_n \leq 0) \\
&\leq \pr(|X_n - \E[X_n]|\geq\E[X_n]) \leq \frac{\Var(X_n)}{(\E[X_n])^2}.
\label{eq:cheby}
\end{align}

For graphs of type $G_2$,
$p_n\leq \frac{\log n - \alpha(n)}{n}$ for some 
$\alpha(n)\rightarrow\infty$. Clearly, for large $n$,
$p_n\leq 1/2$. Then by
Equation~\ref{eq:linearity} and
Equation~\ref{eq:log2}:
\begin{align}
\E[X_n] &= e^{\log n + (n-1)\log(1-p_n)} \\
&\geq e^{\log n - (n-1)(p_n + p_n^2)} \\
&\geq e^{\log n - n p_n - n p_n^2}.
\end{align}

Since $n p_n \leq \log n - \alpha(n)$ and
$n p_n^2 = 1/n \cdot (np_n)^2 \leq 1/n\cdot(\log n)^2\leq 1$,
then as $n\rightarrow\infty$,
$$
\E[X_n]\geq e^{\alpha(n) - 1}\rightarrow\infty.
$$

Next we analyze the variance:
$$
\E[X_n^2] = 
\E\left[\sum_{i=1}^n\sum_{j=1}^n\ind[E_i]\ind[E_j]\right]
= \sum_{i=1}^n\sum_{j=1}^n\pr(E_i, E_j).
$$

Then
$$
\Var(X_n) = \E[X_n^2] - (\E[X_n])^2
= \sum_{i=1}^n\sum_{j=1}^n\left(\pr(E_i, E_j) - \pr(E_i)\pr(E_j)\right).
$$
Also,
$$
\pr(E_i, E_j) = \pr(E_i)\pr(E_j\,\mid\,E_i)
= (1-p_n)^{n-1}(1-p_n)^{n-2}
= (1-p_n)^{2n-3},
$$

since for any two notes $v_i\neq v_j$,
the conditional probability of $v_j$ being isolated, given
that $v_i$ is isolated, is $(1-p_n)^{n-2}$.

Then since $\pr(E_i) = (1-p_n)^{n-1}$, we obtain that
\begin{align}
\Var(X_n) &= n(\pr(E_1)-P(E_1)^2) + n(n-1)(\pr(E_1, E_2) - \pr(E_1)^2) \\
&\leq n\pr(E_1) + n^2(\pr(E_1, E_2) - \pr(E_1)^2) \\
&= n(1-p_n)^{n-1} + n^2((1-p_n)^{2n-3} - (1-p_n)^{2n-2}) \\
&= n(1-p_n)^{n-1} + n^2(1-p_n)^{2n-2}\frac{p_n}{1-p_n} \\
&= \E[X_n] + \E[X_n]^2\frac{p_n}{1-p_n}.
\end{align}

So by Equation~\ref{eq:cheby},
$$
\pr(X_n = 0) \leq \frac{\Var(X_n)}{(\E[X_n])^2}
\leq \frac{1}{\E[X_n]} + \frac{p_n}{1-p_n}
\rightarrow 0.
$$

This implies that 
$\pr[\exists\text{ isolated nodes in }G_2]\rightarrow 1.$

\end{proof}

\subsection{Connected Components}

We prove a similar result (to Lemma~\ref{lem:isolated}) but for
connected components. In fact, the proof of Lemma~\ref{lem:connected}
can be seen as an extended
form of the proof of Lemma~\ref{lem:isolated}.
First, by Lemma~\ref{lem:isolated}, the
existence of disconnected components implies that the graph cannot
be connected. Of course, having no isolated vertices
does not necessarily imply that the graph is connected. 
The probability that a graph is disconnected is upper bounded by the
probability that no components of size $k$, where $1\leq k\leq n/2$,
exist in the graph. We
will show that the latter probability goes to 0 via the use of
Cayley's theorem (Theorem~\ref{thm:cayley}). 
(We only need to consider components of size $1\leq k\leq n/2$ since
the graph on $n$ nodes is disconnected if and only if there exists
a connected component of size at most $n/2$.)

\begin{lemma}

Fix $s\in\naturals$ and let $T_1, T_2\in\reals^{s\times s}$ be two SKG seeds.
Then let $T_1^{\otimes \ell}$ produce a $n$-node graph $G_1$ with
(homogeneous) link probability of 
$p_n \geq \max(9e/n, \frac{\log n + \alpha(n)}{n})$.
And let $T_2^{\otimes \ell}$ produce a $n$-node graph $G_2$ with
(homogeneous) link probability of 
$p_n \leq \frac{\log n - \alpha(n)}{n}$.

Then
$$
\pr[G_1\text{ is connected}]\rightarrow 1,
$$
$$
\pr[G_2\text{ is connected}]\rightarrow 0.
$$

\label{lem:connected}
\end{lemma}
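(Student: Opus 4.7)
The plan splits naturally into two independent arguments for the two sides of the lemma. The $G_2$ direction is essentially immediate from Lemma~\ref{lem:isolated}: under the hypothesis $p_n \leq (\log n - \alpha(n))/n$, that lemma already shows $\pr[\exists\text{ isolated nodes in }G_2]\to 1$, and the presence of an isolated vertex obstructs connectivity, so $\pr[G_2\text{ is connected}]\to 0$. The substantive work is the $G_1$ direction, for which I would use a first moment argument over all possible ``small'' connected components, invoking Cayley's theorem (Theorem~\ref{thm:cayley}) to enumerate spanning trees.

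Concretely, let $X_k$ be the number of connected components of $G_1$ of size exactly $k$. Since $G_1$ is disconnected iff $X_k\geq 1$ for some $1\leq k\leq \lfloor n/2\rfloor$, Markov gives
$$\pr[G_1\text{ is disconnected}] \leq \sum_{k=1}^{\lfloor n/2\rfloor}\E[X_k].$$
For a fixed $S\subset V$ with $|S|=k$, the events ``$G_1[S]$ is connected'' and ``no edge joins $S$ and $V\setminus S$'' depend on disjoint edge sets. Using Cayley's count of $k^{k-2}$ labeled spanning trees (each realized in $G_1[S]$ with probability $p_n^{k-1}$), an $(1-p_n)^{k(n-k)}$ factor for the exiting edges, and a union bound over the $\binom{n}{k}$ choices of $S$, I obtain the standard inequality
$$\E[X_k] \leq \binom{n}{k}\, k^{k-2}\, p_n^{k-1}\, (1-p_n)^{k(n-k)}.$$
The $k=1$ summand is exactly the isolated-vertex bound and vanishes by Lemma~\ref{lem:isolated}, contributing at most $e^{p_n}e^{-\alpha(n)}\to 0$.

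The remaining task is to show $\sum_{k=2}^{\lfloor n/2\rfloor}\E[X_k] = o(1)$. I would apply $\binom{n}{k}\leq(ne/k)^k$ and $(1-p_n)^{k(n-k)}\leq e^{-p_n k(n-k)}$ to rewrite each term as $\frac{1}{k^2 p_n}\bigl(en p_n \, e^{-p_n(n-k)}\bigr)^k$. Writing $\gamma := np_n$ and using $1-k/n \geq 1/2$ for $k \leq n/2$, the logarithm of the inner quantity is at most $1 + \log \gamma - \gamma/2$; the hypothesis $p_n \geq 9e/n$ forces $\gamma \geq 9e$ and hence $1 + \log \gamma - \gamma/2 \leq -c$ for a positive constant $c$, giving geometric-in-$k$ decay. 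The main obstacle is the surviving $n/(k^2\gamma) \sim n/\log n$ prefactor, which by itself is not $o(1)$: to remove it I would split the range, using in the small-$k$ regime the sharper bound $e^{-p_n(n-k)}\leq e^{-\alpha(n)+p_n k}/n$ extracted from the $(\log n+\alpha(n))/n$ hypothesis (this produces an extra $n^{-k}$ factor that kills the $n/\log n$), and in the large-$k$ regime (say $k\gtrsim n/\log n$) observing that $\binom{n}{k}(1-p_n)^{k(n-k)}$ is already doubly-exponentially small since $k(n-k)\geq \Omega(n^2/\log n)$ forces $(1-p_n)^{k(n-k)}\leq e^{-\Omega(n)}$, dominating the $\binom{n}{k}\leq 2^n$ factor. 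The hard part is precisely this bookkeeping: the threshold $9e/n$ is tuned to make $1+\log\gamma-\gamma/2$ strictly negative (with room to spare), while the $(\log n+\alpha(n))/n$ hypothesis contributes the polynomial slack needed to absorb the $n/\log n$ prefactor in the small-$k$ regime.
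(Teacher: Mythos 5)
Your proof is correct and follows the same skeleton as the paper's: the $G_2$ direction via the isolated-vertex lemma, and the $G_1$ direction via Markov over component counts $X^{(k)}$ for $1\le k\le n/2$, Cayley's formula for the $k^{k-2}$ spanning trees, and the bound $\E[X^{(k)}]\le\binom{n}{k}k^{k-2}p_n^{k-1}(1-p_n)^{k(n-k)}$. The only divergence is the endgame, where the paper avoids your regime split: it handles $k=2$ separately using the exact exponent, $(1-p_n)^{2(n-2)}\le e^{-2(np_n-\log n)}\le e^{-2\alpha(n)}$, and for $k\ge 3$ it just sums the geometric series $\sum_{k\ge 3}\frac{1}{k^2p_n}q_n^k\lesssim q_n^3/p_n$ with $q_n=enp_ne^{-np_n/2}\le 8e/(np_n)\le 8/9$. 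Starting the series at the \emph{cube} is the whole trick: $q_n^3/p_n=e^3\,n\,(np_n)^2e^{-3np_n/2}\le e^{3+\log n-np_n}\le e^{3-\alpha(n)}$, so the hypothesis $np_n\ge\log n+\alpha(n)$ absorbs the $n/(np_n)$ prefactor you were worried about with no case analysis in $k$ at all. One caveat on your sketch: in the large-$k$ regime, the claim that $\binom{n}{k}(1-p_n)^{k(n-k)}$ is already small enough to dominate $\binom{n}{k}\le 2^n$ ignores the super-exponential factor $k^{k-2}$ (which can be $e^{\Theta(n\log n)}$); that regime is not handled by crude counting but by your own combined bound $\frac{1}{k^2p_n}\bigl(e\gamma e^{-\gamma/2}\bigr)^k$, since $(8/9)^{\Omega(n/\log n)}$ swamps the polynomial prefactor. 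With that repair your plan goes through, but the paper's single geometric-series finish is cleaner.
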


\begin{proof}

First consider graphs of type $G_2$ where
$p_n \leq \frac{\log n - \alpha(n)}{n}$.
Then,
$$
\pr[G_2\text{ is connected}] \rightarrow 0,
$$
since by Lemma~\ref{lem:isolated},
$\pr[\exists\text{ isolated nodes in }G_2]\rightarrow 1$ so that
$$\pr[G_2\text{ is connected}] \leq 1-\pr[\exists\text{ isolated nodes in }G_2]\rightarrow 0.$$

For the rest of the proof, we will focus on graphs of type $G_1$
where $p_n \geq \max(9e/n, \frac{\log n + \alpha(n)}{n})$.

Consider the $n$ nodes of a graph $G_1$:
$v_1, v_2, \ldots, v_n$. Let
$X^{(k)}$ be the number of components of size $k$ in graph $G_1$.
By Markov's inequality,
\begin{align}
\pr[G_1\text{ is disconnected}]
&\leq \pr\left(\sum_{k\leq n/2} X^{(k)}\geq 1\right) \\
&= \sum_{k\leq n/2} \E[X^{(k)}].
\label{eq:g1-discon}
\end{align}

Let us start off simple: 
$X^{(1)}$ is the number of isolated vertices
and we have that, by Lemma~\ref{lem:isolated},
\begin{equation}
\E[\text{isolated nodes in }G_1] \leq e^{p_n}e^{-\alpha(n)}\rightarrow 0.
\end{equation}
So $\E[X^{(1)}]\rightarrow 0$.

Now consider $X^{(2)}$ which is
$$
\E[X^{(2)}] = {n\choose 2}p_n(1-p_n)^{2(n-2)},
$$
since every node pair is linked with probability $p_n$ and
each node pair is not connected to any other node in the graph
with probability $(1-p_n)^{2(n-2)}$. Then it follows that
\begin{align}
\E[X^{(2)}] 
&\leq n^2p_ne^{-2(n-2)p_n} \\
&= p_ne^{4p_n}e^{-2(np_n-\log n)} \\
&\leq p_ne^{4p_n}e^{-2\alpha(n)} \\
&\leq e^{4-2\alpha(n)} \rightarrow 0,
\end{align}
where we used that ${n\choose 2} \leq n^2$ for $n\geq 1$,
$1-t\leq e^{-t}$ for all $t\in[0, 1)$,
$np_n - \log n \leq \alpha(n)$,
and $p_n\leq 1$.

Next, we consider $X^{(k)}$ for all integer $k\geq 3$.
A (trivial) fact we can use is that any connected component in a graph
contains a spanning tree and consider such spanning trees.
Then since
$X^{(k)} = \sum_{C}\ind[C\text{ is a connected component of size }k] \leq \sum_{C}\sum_{T}\ind[T\text{ is a subgraph of }G_1]\ind[C\text{ is isolated}]$, we have that
\begin{equation}
\E[X^{(k)}]
\leq \sum_{C}\sum_{T}\pr[T\text{ is a subgraph of }G_1]\pr[C\text{ is isolated}].
\label{eq:sub-C}
\end{equation}

We know that any tree $T$ with $k$ vertices has \textit{exactly}
$k-1$ edges. Also, any component of size $k$ and its complement
has \textit{exactly} $k(n-k)$ node pairs between them. Then
$$
\pr[C\text{ is isolated}] = (1-p_n)^{k(n-k)},
$$
and
$$
\pr[T\text{ is a subgraph of }G_1] = p_n^{k-1}.
$$

Then Equation~\ref{eq:sub-C} becomes
\begin{equation}
\E[X^{(k)}] \leq {n\choose k}k^{k-2}p_n^{k-1}(1-p_n)^{k(n-k)},
\label{eq:fact-C}
\end{equation}
since by Cayley's Theorem (Theorem~\ref{thm:cayley}), the number of
trees on a set of $k$ vertices is $k^{k-2}$.

By applying the Stirling approximation lower bound
$k! \geq e^{-k}k^{k+1/2}\sqrt{2\pi}$, we have
\begin{align}
{n\choose k}
&= \frac{n(n-1)\cdots(n-k+1)}{k!} \\
&\leq \frac{n^k}{k!} \\
&\leq \frac{n^ke^k}{k^k}.
\end{align}
Also for $n\geq 2$, $k\geq n/2$ we have
$$
(1-p_n)^{k(n-k)} \leq e^{-k(n-k)p_n} \leq e^{-knp_n/2},
$$
so that Equation~\ref{eq:fact-C} becomes
$$
\E[X^{(k)}] \leq \frac{n^ke^k}{k^2}p_n^{k-1}e^{-knp_n/2}
= \frac{1}{p_nk^2}(enp_ne^{-np_n/2})^k
= \frac{1}{p_nk^2}q_n^k,
$$
where $q_n = enp_ne^{-np_n/2} \leq 8e(n p_n)^{-1}$ since
$e^t\geq 1/2\cdot t^2$.
And by the assumption in the theorem statement, $n p_n\geq 9e$ so that
$q_n \leq \frac{8}{9} < 1$.

As a result,
\begin{align}
\sum_{3\leq k\leq n/2}\E[X^{(k)}] 
&\leq \sum_{3\leq k\leq n/2} \frac{1}{p_nk^2}q_n^k, \\
&\leq \frac{1}{9p_n}\sum_{k=3}^\infty q_n^k \leq \frac{1}{9p_n}\cdot\frac{q_n^3}{1-q_n} \\
&\leq \frac{1}{p_n} q_n^3 \\
&= e^{3+\log n + 2\log(np_n)-3/2np_n} \\
&\leq e^{3+\log n - n p_n} \\
&\leq e^{3-\alpha(n)},
\end{align}
since (i) $np_n - \log n \geq \alpha(n)$, (ii)
$e^t \geq t + 1/2\cdot t^2 + 1/6\cdot t^3 = 4t + (t-3)(t+6)\geq 4t$
for $t\geq 3$ implies that
$\log(n p_n) \leq 1/4 n p_n$ for $n p_n\geq e^3$ (holds since
$np_n\geq 9e$).

Finally, Equation~\ref{eq:g1-discon} becomes
\begin{align}
\pr[G_1\text{ is disconnected}]
&\leq \sum_{k\leq n/2} \E[X^{(k)}] \\
&= \E[X^{(1)}] + \E[X^{(2)}] + \sum_{3\leq k\leq n/2}\E[X^{(k)}] \\
&\rightarrow 0,
\end{align}
as $n\rightarrow\infty$.

This completes the proof.

\end{proof}

\begin{theorem}[Cayley's Theorem/Formula. See~\citep{cayley_2009}]
The number of spanning trees of a complete graph on $n$ vertices is
$n^{n-2}$.
\label{thm:cayley}
\end{theorem}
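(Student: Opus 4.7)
The plan is to prove Cayley's formula via the classical Prüfer bijection, which is the cleanest elementary route and requires no heavy machinery. The strategy is to exhibit an explicit bijection between the set of spanning trees on the labeled vertex set $\{1,2,\ldots,n\}$ and the set of sequences of length $n-2$ with entries in $\{1,2,\ldots,n\}$. Since the latter set has cardinality exactly $n^{n-2}$, this immediately yields the theorem.

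First I would define the encoding map $\Phi$ from labeled trees to Prüfer sequences. Given a tree $T$ on $\{1,\ldots,n\}$, I would run the following iterative procedure: at each of $n-2$ steps, locate the leaf of the current tree whose label is smallest, record the label of its unique neighbor as the next entry in the sequence, then delete that leaf. This produces a sequence in $\{1,\ldots,n\}^{n-2}$. A key observation I would highlight is that at any intermediate stage the remaining graph is still a tree (removing a leaf preserves the tree property), so the process is well-defined and terminates with a single edge remaining.

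Next I would define a candidate inverse $\Psi$ from sequences to trees. Given a sequence $(a_1,\ldots,a_{n-2})$, I would iteratively identify the smallest label not appearing in the remaining suffix of the sequence and not already removed; this is declared to be the leaf attached to $a_i$ at step $i$, generating an edge $\{\ell_i, a_i\}$; then $\ell_i$ is marked used. After $n-2$ steps, two vertices remain, and they are joined by the final edge. I would verify that the resulting graph has exactly $n-1$ edges and is acyclic (by induction on $n$, using that $\ell_i$ is a leaf of the tree being constructed), hence a tree.

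The main obstacle, and the most delicate part of the write-up, is showing $\Psi\circ\Phi=\mathrm{id}$ and $\Phi\circ\Psi=\mathrm{id}$. The crux is a structural lemma: in a tree $T$, a vertex $v$ appears in $\Phi(T)$ exactly $\deg_T(v)-1$ times, so $v$ is a leaf iff it never appears in the Prüfer sequence. Using this lemma, the smallest leaf of $T$ at step $i$ of the encoding coincides with the smallest label not appearing in the suffix $(a_i,\ldots,a_{n-2})$ and not yet removed, which is precisely the choice made by $\Psi$ at step $i$. Once this identification is established, both compositions reduce to straightforward induction on $n$. With the bijection in hand, counting gives $|\{\text{trees on } n \text{ labeled vertices}\}| = n^{n-2}$, completing the proof.
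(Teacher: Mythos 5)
The paper does not prove this statement at all: it is invoked as a classical fact and attributed to the literature via the citation, so there is no in-paper argument to compare against. Your Pr\"ufer-sequence bijection is a correct and entirely standard proof of Cayley's formula; the outline identifies the right key lemma (a vertex $v$ appears exactly $\deg_T(v)-1$ times in the code, so leaves are exactly the non-appearing labels) and correctly reduces both compositions $\Psi\circ\Phi$ and $\Phi\circ\Psi$ to induction on $n$. The only cosmetic caveat is that the argument as written assumes $n\geq 2$ (for $n=2$ the code is the empty sequence and the count $2^{0}=1$ is still right), which is worth a one-line remark but is not a gap.
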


Now we can obtain the following theorem from
Lemma~\ref{lem:connected}:

\begin{theorem}

Let $\calA$ be a polynomial-time algorithm that can generate
a graph, via a seed $S$, which follows a particular
degree distribution for any node set size
$n\in\naturals$.

Then assuming Conjecture~\ref{conj:1},
there exists a degree distribution $\calD$ and
$\beta(n)\geq 0$ for which
$\calA$ will be unable to ensure that the generated graph
is connected while remaining $\beta(n)$ SKG statistically
identifiable, in the limit.

\label{thm:connected}
\end{theorem}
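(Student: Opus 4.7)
The plan is to mirror the proof of Theorem~\ref{thm:isolated} almost verbatim, replacing the use of Lemma~\ref{lem:isolated} with the stronger Lemma~\ref{lem:connected}. First, I would reduce to the uniform degree distribution: for any target random variable $X$ with CDF $F_X$, the transformed variable $F_X(X)$ is uniform, so it suffices to work with seeds whose Kronecker-generated graphs yield (approximately) homogeneous Bernoulli link structure with some probability $p_n$. This matches the setup used in Theorem~\ref{thm:isolated}.

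Next, I would invoke Lemma~\ref{lem:connected} to produce two candidate seeds $T_1, T_2 \in \reals^{s\times s}$ with link probabilities on opposite sides of the threshold. Choose $T_1$ so that the induced homogeneous probability satisfies $p_n \geq \max(9e/n,\,(\log n + \alpha(n))/n)$, so that $G_1 = T_1^{\otimes \ell}$ is connected with probability tending to $1$; choose $T_2$ so that the induced probability satisfies $p_n \leq (\log n - \alpha(n))/n$, so that $G_2 = T_2^{\otimes \ell}$ is disconnected with probability tending to $1$. Exactly one of these seeds generates a graph that is connected in the limit, while the other fails to be so.

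The key step is then to argue, as in the proof of Theorem~\ref{thm:isolated}, that the two seeds can be chosen with their effective link probabilities within $2\alpha(n)/n$ of one another, so that the total variation distance between the two degree distributions is at most some $\beta(n) \leq \alpha(n)$. This places both seeds in the same $\beta(n)$ SKG-identifiability class for the target $\calD$. By Conjecture~\ref{conj:1}, any polynomial-time algorithm $\calA$ is unable to distinguish which of the two seeds it is given when asked to match $\calD$ within $\beta(n)$, so it cannot systematically select the connected-producing seed $T_1$ over the disconnected-producing seed $T_2$. Consequently, no such $\calA$ can simultaneously be $\beta(n)$ SKG statistically identifiable with $\calD$ and guarantee that its output graph is connected in the limit.

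The main obstacle I anticipate is not in the probabilistic thresholding---Lemma~\ref{lem:connected} does all of that work---but in ensuring that the two seed matrices $T_1, T_2$ produce degree distributions close enough in total variation that a putative identifier genuinely cannot rule either out. This is precisely where the $2\alpha(n)/n$ gap on the link probabilities is exploited, and where Conjecture~\ref{conj:1} is applied to preclude polynomial-time separation. Everything else is a direct substitution of ``isolated vertex'' by ``connected component of size $\leq n/2$'' in the template of Theorem~\ref{thm:isolated}.
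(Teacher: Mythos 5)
Your proposal matches the paper's own proof of Theorem~\ref{thm:connected}: the paper likewise reduces to the uniform degree distribution via the CDF transform, constructs two seeds $T_1, T_2$ with homogeneous link probabilities within $2\alpha(n)/n$ straddling the connectivity threshold of Lemma~\ref{lem:connected}, and invokes Conjecture~\ref{conj:1} to conclude that a polynomial-time algorithm cannot distinguish them. This is essentially the same argument, so no further comparison is needed.
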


\begin{proof}

Again, let us  consider the uniform degree
distribution since for any random variable $X$ with CDF $F_X$,
we can transform to a uniformly distributed random variable
via $F_X(X)$.

We can focus on the uniform degree distribution generated by a 
homogeneous Bernoulli graph: each graph sequence $G_n$
has probability of $p_n$ of any two nodes having an edge.
Then by Lemma~\ref{lem:connected}, if $p_n$ is above some 
threshold, $\pr[G_n\text{ is connected }]\rightarrow 1$.
Otherwise, $\pr[G_n\text{ is connected }]\rightarrow 0$.

By the conjecture, since the algorithm is bounded by
a polynomial-time runtime, it cannot distinguish between
two seeds $T_1$ and $T_2$, exactly
one of which generates a graph that is connected
(in the limit). Moreover, the edge link probabilities of
the two graphs are chosen to be within $2\alpha(n)/n$ to
induce closeness in degree distributions.
In particular, the TV distance between the distributions of the two
generated graphs is at most $\beta(n) \leq \alpha(n)$.

As a result, by Lemma~\ref{lem:connected}, the algorithm $\calA$
cannot ensure the generated graph is SKG statistically 
identifiable is connected.

\end{proof}

Thus, if the conjecture is true, we cannot expect 
SKG algorithms to follow certain degree distributions and
ensure no isolated vertices or disconnected components.

Next, we turn to the $\rpskg$ algorithm that can interpolate
seeds to generate graphs that follow degree distributions.
In practice (via experiments), the algorithm yields superior
improvements over existing SKG algorithms (NSKG and
vanilla SKG).

\section{Relative Prime Stochastic Kronecker Graphs}
\label{sec:rpskg}

The SKG model~\citep{Leskovec:2010} has been effective in
capturing degree characteristics of heavy-tailed degree
distributions. But it can result in significant oscillations
in the degree distribution of the graphs produced.
On the other hand,
the per-level noise multipliers added in the NSKG model can
destroy noticeable properties of the graph.
The NSKG model also comes with significant additional
parameterization as one might have to keep track of the randomness
used to generate the perturbed seed in each level.

The $\rpskg$ algorithm aims to fix problems in the generation
procedure based on one starting seed. The algorithm proceeds by 
interpolating the use of a $2\times 2$ seed matrix with a 
$3\times 3$ seed matrix. As can be seen in the experimental
section (Section~\ref{sec:exp}), the algorithm can
be very effective at generating intended degree distributions.
One subroutine in the algorithm involves sampling a $3\times 3$
seed from a $2\times 2$ seed. This approach can be generalized to
sampling an $m\times n$ seed matrix.

\subsection{Degree-2 and Degree-4 Optimization Problems}
\label{sec:degree-nm}

Let $S\in\reals^{a\times a}$ be a starting kronecker seed. Then to generate any graph of vertex size $a^t$ for any $t\in\naturals$,
we generate $S^{\otimes t}$ and sample an adjacency matrix from the
resulting matrix. As $t\rightarrow\infty$, $S^{\otimes t}$ yields a
probability distribution for a graph of any size. Using this fact and
a \textit{careful} analysis, we are able to sample seeds of relative
prime dimension from another Kronecker seed.

\begin{definition}[Kronecker Graph Distribution Function (KGD)]
Fix any starting seed $S\in\reals^{s\times s}$.
For all integer
$t\in[1,\infty)$, 
define $p_t = S^{\otimes t}\in\reals^{s^t\times s^t}$.
Define the normalization factor
$P_\infty = \int_{0}^\infty \int_{0}^\infty p_t(t_1, t_2) dt_1 dt_2$.

Let its KGD
$f:[0, 1]^2\times[0, 1]^2\rightarrow\reals$ be defined as:
\begin{enumerate}
\item For any
$(x_1, x_2), (y_1, y_2)\in[0, 1]^2$, 
$f((x_1, x_2), (y_1, y_2)) = f((0, x_2), (0, y_2)) - f((0, x_1), (0, y_1))$.
\item 
\begin{align}
&f((0, x), (0, y)) =\\
&\inf\{t_1^*\cdot t_2^*\,:\, (t_1^*, t_2^*)\in\reals^2, \frac{1}{P_\infty} \int_{0}^\infty\int_{0}^{t_1^*}p_t(t_1, t_2) dt_1 dt_2 \geq x, 
\frac{1}{P_\infty} \int_{0}^\infty\int_{0}^{t_2^*} p_t(t_1, t_2) dt_1 dt_2 \geq y\}.
\end{align}

\end{enumerate}

\label{def:kgd}
\end{definition}

Intuitively,
Definition~\ref{def:kgd} acts as an ``inverse CDF'' of the distribution
over $\lim_{t\rightarrow\infty} S^t$ for any starting seed $S$.
Also, $P_\infty$ is the area of $p_t$ and is used as a normalization
factor so that $f$ ranges from
$f((0, 0), (0, 0))$ to $f((0, 1), (0, 1))$.
Algorithm~\ref{alg:degree-nm} represents a general procedure for sampling 
an $m\times n$ seed from the uniform distribution over a starting
seed $T$. We then specialize this procedure to sample a
$3\times 1$ seed from a $2\times 1$ seed and sample a
$3\times 3$ seed from a $2\times 2$ seed.

\begin{algorithm}
  \KwData{$m, \ell, k, S$}
  Let $f:[0, 1]^2\times[0,1]^2\rightarrow\reals$ be the
  Kronecker Graph Distribution Function (KGD) of the seed $S$.

  Let $T$ be an $m\times n$ seed matrix\;

  \For{$i=1 \to m$} {
    \For{$j=1 \to n$} {
      $T[i, j] = f((\frac{i-1}{m}, \frac{i}{m}), (\frac{j-1}{n}, \frac{j}{n}))$
    }
  }
  \Return $T$\;
  \caption{Sampling $m\times n$ Seed}
  \label{alg:degree-nm}
\end{algorithm}

We also provide a closed-form solution that describes the desired $3\times 3$ matrix sampled from the uniform distribution over an \emph{arbitrary} $2 \times 2$ seed and use this closed-form in
Algorithm~\ref{alg:rpskg}. In that way we can create graphs with properties determined solely by the $2\times 2$ seed, as opposed to the NSKG model where the properties of the graph heavily vary depending on the noise parameters.

As a gentle start we will show how to compute the closed form solution in the 1-dimensional case and we will proceed to generalize it to the much more complicated 2-dimensional case.

\subsubsection{1-Dimensional Seed}

\begin{lemma}
There exists a polynomial-time
degree-2 optimization formulation to sample an SKG
$3\times 1$ seed from the uniform distribution over a
$2\times 1$ seed.
\label{lem:sample3x1}
\end{lemma}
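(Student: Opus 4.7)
The plan is to specialize Algorithm~\ref{alg:degree-nm} to $(m, n) = (3, 1)$ and show that the three resulting entries can be obtained in closed form by solving a degree-2 polynomial system in the original seed entries. Write the $2\times 1$ seed as $S = (a, b)^\top$ with $a + b = 1$, and let $\mu$ denote the weak limit of the discrete measures on $\{k/2^t : 0 \le k < 2^t\}$ whose weights are the entries of $S^{\otimes t}$. Then the desired $3\times 1$ seed is
\[
T = \bigl(\mu([0, 1/3]),\; \mu([1/3, 2/3]),\; \mu([2/3, 1])\bigr),
\]
which is exactly what the KGD of Definition~\ref{def:kgd} returns when applied at the three thirds of $[0,1]$ (the second coordinate being trivial in this one-dimensional setting).

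First I would record the elementary self-similarity of $\mu$: from $S^{\otimes (t+1)} = S \otimes S^{\otimes t}$, the mass on $[0, 1/2]$ is $a$ times a rescaled copy of $\mu$ and the mass on $[1/2, 1]$ is $b$ times a rescaled-and-shifted copy, giving
\begin{align*}
\mu([0, x]) &= a\, \mu([0, 2x]), \quad x \in [0, 1/2], \\
\mu([0, x]) &= a + b\, \mu([0, 2x - 1]), \quad x \in [1/2, 1].
\end{align*}
Evaluating at $x = 1/3$ and $x = 2/3$ (using that $2 \cdot 1/3 = 2/3$ and $2 \cdot 2/3 - 1 = 1/3$) yields the coupled pair
\[
\mu([0, 1/3]) = a\, \mu([0, 2/3]), \qquad \mu([0, 2/3]) = a + b\, \mu([0, 1/3]).
\]
This is a two-unknown system that is bilinear in the seed entries and the unknown masses; back-substitution gives a single degree-2 polynomial equation in $\mu([0, 1/3])$.

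Next I would solve the system, obtaining $\mu([0, 1/3]) = a^2/(1 - ab)$ and hence
\[
T \;=\; \frac{1}{1 - ab}\bigl(a^2,\; ab,\; b^2\bigr).
\]
Each entry is a ratio of degree-2 polynomials in $(a, b)$, positivity is immediate, and summation to $1$ follows from $a^2 + ab + b^2 = 1 - ab$. The whole computation uses $O(1)$ arithmetic operations, so the procedure runs in polynomial time in the bit length of $(a, b)$. Framing this as ``degree-2 optimization'' then amounts to presenting the closed form as the solution of the polynomial system above, or equivalently as the minimizer of a quadratic loss that enforces the two self-similarity equations; in either case the constraints are of total degree at most $2$ in the seed variables together with the target probabilities.

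The main obstacle I anticipate is not the algebra but justifying that the weak limit $\mu$ exists and that the KGD from Definition~\ref{def:kgd}, restricted to a $2 \times 1$ input, agrees with the interval masses of $\mu$. Existence and uniqueness of $\mu$ reduces to a classical Bernoulli-convolution argument for the iterated function system $\{x \mapsto x/2,\; x \mapsto (x+1)/2\}$ with weights $(a, b)$; matching it to the infimum operation in the KGD requires carefully checking that, in one dimension, the inf collapses to the inverse CDF of $\mu$ evaluated at the target quantile, so that the bucket masses $\mu([(i-1)/3, i/3])$ are precisely what Algorithm~\ref{alg:degree-nm} outputs.
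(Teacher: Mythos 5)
Your proposal is correct and reaches the paper's closed form $\bigl(\tfrac{a^2}{1-ab},\tfrac{ab}{1-ab},\tfrac{b^2}{1-ab}\bigr)$ (the paper writes the middle entry as $\tfrac{a-a^2}{1-ab}$, which coincides with yours under the normalization $a+b=1$) by essentially the same route: the paper unrolls the recursive structure of $\lim_t T^{\otimes t}$ into the explicit geometric series $a^2+a^2(ab)+a^2(ab)^2+\cdots$ using $\tfrac{1}{3}=\sum_i 4^{-i}$, whereas you solve the equivalent self-similarity fixed-point system $\mu([0,\tfrac13])=a\,\mu([0,\tfrac23])$, $\mu([0,\tfrac23])=a+b\,\mu([0,\tfrac13])$, which is the same computation packaged as a $2\times 2$ linear system. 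Your added care about existence of the weak limit and its agreement with the KGD of Definition~\ref{def:kgd} goes beyond what the paper checks, and is not a gap.
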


\begin{proof}
Assume that we are given a 1-dimensional seed vector $T = [a\, b]$. What we would like to do is to create another vector with three entries corresponding to the integral of the uniform distribution over the intervals $[0, \frac{1}{3}], \, [\frac{1}{3}, \frac{2}{3}]$ and $[\frac{2}{3}, 1]$.

The uniform distribution over the seed can be defined as:
$p[0\ldots 1]$ is the uniform distribution over $\lim_{t \to \infty} T^{\otimes t}$, where $T^{\otimes t}$ is the Kronecker series of $T$ of size $t$. $T^{\otimes t}$ is a vector of size $2^t$. We can assume that the first coordinate of this vector is the value $p(0)$ and the last is the value $p(1)$. 
Let us denote with $f(x, y)$ the probability of $[x,y]$ according to $p$, i.e., $f(x,y) = \int_x^yp(t)dt$.

To compute the mass lying underneath the curve from 0 to 1/3, we first note that $\frac{1}{3} = \sum_{i=1}^\infty \frac{1}{4^i}$. Thus, computing $f(0, \frac{1}{3})$ is the same as computing the infinite series

$$f(0, \frac{1}{4}) + f(\frac{1}{4}, \frac{5}{16}) + f(\frac{5}{16}, \frac{21}{64}) + \cdots = \sum_{i=0}^\infty
f\left(\sum_{j=1}^i\frac{1}{4^j},
\sum_{j=1}^{i+1}\frac{1}{4^j}\right)
$$

Now note that these terms are very easily computed. The key insight is that in order to compute $f(0, \frac{1}{4})$ we just need to take the integral of the first quarter of the infinite tensor series. By taking the Kronecker product $T \otimes T = [a^2\, ab\, ba\, b^2]$ we basically learn the probability mass under the curve for any of the intervals $[0,\frac{1}{4}],\, [\frac{1}{4}, \frac{1}{2}], \, [\frac{1}{2}, \frac{3}{4}]$ and $[\frac{3}{4}, 1]$.

Thus $f(0, \frac{1}{4}) = a^2$. To compute $f(\frac{1}{4}, \frac{5}{16})$ we advance into the second quarter and then we take the first quarter again (i.e., the first quarter of the second quarter = fifth sixteenth of the matrix, which again is trivial to compute because it is an element of the matrix $T^{\otimes 4}$). Thus,  $f(\frac{1}{4}, \frac{5}{16}) = a^2 \cdot ab = a^3b$. Now we will move to the first quarter of the second quarter of the second quarter and repeat the same procedure. Eventually the infinite series that we get is:

$$a^2 + a^2\cdot ab + a^2 \cdot (ab)^2 + \cdots = \frac{a^2}{1 - ab},$$
where $ab < 1$ (without loss of generality since we can always re-scale the elements of the starting
kronecker seed). In the geometric series, $a^2$ is the starting term and the ratio of the series
is $ab$.

So now we have computed the first out of three coordinates of the vector. Using the exact same procedure we can compute the last one as well, by just reversing $a$ and $b$ and thus computing the mass of the distribution in $[2/3, 1]$, simply as $f(\frac{2}{3}, 1) = \frac{b^2}{1 - ab}$. Since $f(0,1)$ corresponds to the integral of a probability distribution we know that $f(0,1) = f(0, \frac{1}{3}) + f(\frac{1}{3}, \frac{2}{3}) + f(\frac{2}{3}, 1) = 1$ and thus we can also compute the value of $f(\frac{1}{3}, \frac{2}{3})$.

Another simple way to compute $f(\frac{1}{3}, \frac{2}{3})$ that will be useful in the 2-dimensional case as well is the following. Note that $\frac{2}{3} = 2\cdot \frac{1}{3} = 2\sum_{j=1}^\infty \frac{1}{4^j} = \sum_{j=0}^\infty\frac{1}{2^{2j+1}}$. Using the same technique as in the case of $f(0, \frac{1}{3})$, we can compute $f(0, \frac{2}{3}) = f(0, \frac{1}{2}) + f(\frac{1}{2}, \frac{5}{8}) + \cdots = \frac{a}{1- ab}$. So, $f(\frac{1}{3}, \frac{2}{3}) = f(0, \frac{2}{3}) - f(\frac{1}{3}, \frac{2}{3}) = \frac{a - a^2}{1 - ab}$.

Thus, we proved a closed form solution for sampling a 3-dimensional vector from the uniform distribution over a 2-dimensional vector. The case where our seed is 2-dimensional (i.e., a matrix and not a vector anymore) is 
more complicated and is analyzed in the following subsection.
Furthermore, starting from the $2\times 1$ seed
$$T = [a\quad b],$$
we obtain the $3\times 1$ seed
$$U = \left[\frac{a^2}{1-ab}\quad\frac{a-a^2}{1-ab}\quad\frac{b^2}{1-ab}\right].$$
These are clearly all degree-2 in the original variables.

\end{proof}

\subsubsection{2-Dimensional Seed}

\begin{lemma}
There exists a polynomial-time
degree-4 optimization formulation to sample an SKG
$3\times 3$ seed from the uniform distribution over a
$2\times 2$ seed.
\label{lem:sample3x3}
\end{lemma}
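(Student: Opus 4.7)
The plan is to lift the 1D construction of Lemma \ref{lem:sample3x1} to the 2D setting by exploiting the self-similarity of the limiting measure on $[0,1]^2$ induced by $T^{\otimes t}$. Write $T = \begin{bmatrix} a & b \\ c & d \end{bmatrix}$ and let $g(R)$ denote the mass that the limiting distribution assigns to a region $R\subseteq [0,1]^2$. The target $3\times 3$ entries are $F_{ij} = g([i/3,(i+1)/3]\times[j/3,(j+1)/3])$ for $i, j \in \{0, 1, 2\}$, and self-similarity means that $g$ restricted to any of the four dyadic quadrants of $[0,1]^2$ is a rescaled copy of $g$ weighted by $a$, $b$, $c$, or $d$ respectively.

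First I would record the row and column marginals of $g$ by invoking Lemma \ref{lem:sample3x1} on the 1D seeds $[a+b,\,c+d]$ and $[a+c,\,b+d]$. Writing $u = a+b$, $v = c+d$, $u' = a+c$, $v' = b+d$, this gives row-strip masses $A = u^2/(1-uv)$, $C = v^2/(1-uv)$, $B = 1 - A - C$, and analogous column-strip masses $\alpha, \beta, \gamma$. These are degree-$2$ expressions (after clearing denominators) and form the only inputs inherited from the 1D case.

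Next I would derive the corner entry $F_{00} = g([0,1/3]^2)$ using the same geometric-series idea that drove the 1D proof, only applied jointly. Splitting $[0,1/3] = [0,1/4]\cup[1/4,1/3]$ in each coordinate and using that $[1/4,1/3]$ rescales to $[0,1/3]$ inside $[1/4,1/2]$, the four-way decomposition across the first Kronecker level yields the self-referential identity $F_{00} = a^2 + ab\,\alpha + ac\,A + ad\,F_{00}$, so $F_{00} = (a^2 + ab\,\alpha + ac\,A)/(1-ad)$, a degree-$4$ rational expression in $(a,b,c,d)$. For the remaining eight entries I would introduce the cumulative quantities $\phi(p,q) = g([0,p]\times[0,q])$ for $p, q \in \{1/3, 2/3\}$ and derive similar quadrant recursions. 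The mixed-quadrant terms couple $\phi(1/3, 2/3)$ and $\phi(2/3, 1/3)$ through the $2\times 2$ linear system $\phi(1/3,2/3) = a(A+B) + b\,\phi(2/3,1/3)$ and $\phi(2/3,1/3) = a(\alpha+\beta) + c\,\phi(1/3,2/3)$, whose determinant is $1 - bc$; once this is inverted, $\phi(2/3,2/3) = a + b\alpha + cA + d\,F_{00}$ is immediate. The nine $F_{ij}$ are then recovered by 2D inclusion--exclusion from the $\phi$ values together with the marginals.

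The main obstacle, absent in the 1D case, is precisely this coupled self-reference of the off-diagonal $\phi$ values: the clean single-geometric-series closed form of Lemma \ref{lem:sample3x1} must be replaced by inversion of a small linear system whose coefficients are the seed entries. Once this is done, every $F_{ij}$ is a rational function of $(a,b,c,d)$ whose numerator has total degree at most $4$ and whose denominator is a product of factors of the form $1-ad$, $1-uv$, $1-u'v'$, and $1-bc$, which justifies the claim of a polynomial-time, degree-$4$ optimization formulation. I would close by checking $\sum_{i,j} F_{ij} = 1$ using the marginal identities to confirm that the output is a valid stochastic seed.
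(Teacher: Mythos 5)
Your proof is correct, and its core --- the self-similar recursion for the corner entry, yielding $F_{00} = (a^2 + ab\,\alpha + ac\,A)/(1-ad)$ --- is exactly the computation in the paper, just phrased as a fixed-point identity rather than as the paper's explicit sum of three geometric series (rightward, downward, and diagonal with ratio $ad$). Where you diverge is in recovering the remaining eight entries: the paper obtains the other three corners $C$, $G$, $I$ by transposing/reflecting $T$ and reusing the corner formula, then reads off $B$, $H$, $D$, $F$ from the row- and column-strip marginals and sets the center $E$ by normalization; you instead compute the cumulative masses $\phi(p,q)$ on the grid $\{1/3,2/3\}^2$, which forces you to invert the coupled system $\phi(1/3,2/3) = a(A+B) + b\,\phi(2/3,1/3)$, $\phi(2/3,1/3) = a(\alpha+\beta) + c\,\phi(1/3,2/3)$ with determinant $1-bc$, and then apply two-dimensional inclusion--exclusion. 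Both routes are valid and produce the same rational functions (the factor $1-bc$ surfaces in the paper as well, as the diagonal ratio governing the off-diagonal corners); yours is more systematic and would generalize more cleanly to $m\times n$ targets, while the paper's exploits symmetry to avoid inverting any linear system. Your accounting of the degree (degree-$4$ numerators over products of degree-$2$ factors $1-ad$, $1-bc$, $1-(a+b)(c+d)$, $1-(a+c)(b+d)$) and your normalization check $\sum_{i,j}F_{ij}=1$ are at the same level of informality as the paper's own ``degree-4 in the original variables'' claim, so there is no gap relative to what the paper proves.
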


\begin{proof}

Now, assume that we are given a $2\times 2$ seed matrix
$T = \left[\begin{smallmatrix}
    a&b \\ c&d
    \end{smallmatrix} \right]$
 and we would like to create the $3\times 3$ matrix with values corresponding to the integrals of the uniform distribution associated with the seed, in the intervals $[0, \frac{1}{3}]\times [0, \frac{1}{3}], \, [0, \frac{1}{3}]\times [\frac{1}{3}, \frac{2}{3}]$, etc. The way the uniform distribution is defined, is completely analogous to the 1-D case. We generalize the semantics of $f$ in that case to as follows: $f((x_1, x_2), (y_1, y_2)) = \int_{y_1}^{y_2}\int_{x_1}^{x_2} p(t_1,t_2)dt_1dt_2$, where $p$ is the 2-dimensional uniform distribution associated with $T$.

We will use the same infinite series approach as before. This time however, in order to compute $f(0, \frac{1}{3})$, we will work on both dimensions simultaneously, expanding the $[0, \frac{1}{4}]\times [0, \frac{1}{4}]$ cell of the $4 \times 4$ matrix $T^{\otimes 2}$ until it reaches (asymptotically) the desired $[0, \frac{1}{3}]\times [0, \frac{1}{3}]$ tile of the $3 \times 3$ matrix $T$. This is more complicated since the cell needs to expand towards the right, towards down and towards the diagonal direction as well. That is: \begingroup\makeatletter\def\f@size{9}\check@mathfonts$$f\big((0, \frac{1}{3}), (0, \frac{1}{3})\big) = f\big((0, \frac{1}{4}), (0, \frac{1}{4})\big) + f\big((0, \frac{1}{4}), (\frac{1}{4}, \frac{5}{16})\big) + f\big((\frac{1}{4}, \frac{5}{16}), (0, \frac{1}{4})\big) + f\big((\frac{1}{4}, \frac{5}{16}), (\frac{1}{4}, \frac{5}{16})\big) + \cdots$$\endgroup

\noindent To make things more concrete, consider the $T^{\otimes 2}$ matrix (we denote with $\cdot$ the elements that are not useful in this computation):

\[T^{\otimes 2} =
 \begin{bmatrix}
    a^2 & ab & ab & b^2\\
    ac & ad & bc & bd\\
    ac & bc & \cdot & \cdot\\
    c^2 & cd & \cdot & \cdot\\
  \end{bmatrix}
\]

\medskip

To compute the series towards the right, i.e., $f\big((0, \frac{1}{4}), (\frac{1}{4}, \frac{5}{16})\big) + f\big((0, \frac{1}{4}), (\frac{5}{16}, \frac{21}{64})\big) + \cdots$, we start from the probability of being at the first column of the matrix and then recursively move to the second column. The probability of being at the first column is $a^2 + ac + ac + c^2 = (a+c)^2$, while the ratio of the series (i.e., the probability of being at the second column) is $ab + ad + cb + cd = (a+c)(b+d)$. Thus, analogously to the 1-D case, the infinite series towards the right is $\frac{(a+c)^2}{1 - (a+c)(b+d)}$.

Completely symmetrically, we can compute the infinite series towards down, by using the first row and recursively moving to the second row (instead of first and second column), i.e., $\frac{(a+b)^2}{1 - (a+b)(c+d)}$. 
Finally, the direction of expansion of the cell is towards the diagonal. The initial value of this series is

$$a^2 + ab \cdot \frac{(a+c)^2}{1 - (a+c)(b+d)} + ac \cdot \frac{(a+b)^2}{1- (a+b)(c+d)}$$

and the ratio of the series is just $ad$ (because in every step of the iteration we just move towards the $(2,2)$ cell of the matrix, which has probability $ad$). So eventually, putting everything together, we have that: 
$$f\big((0, \frac{1}{3}), (0, \frac{1}{3})\big) = \frac{a^2 + ab \cdot \frac{(a+c)^2}{1 - (a+c)(b+d)} + ac \cdot \frac{(a+b)^2}{1- (a+b)(c+d)}}{1- ad}$$

To make things further clear from now on let M be the $3\times 3$ matrix:

\[M =
 \begin{bmatrix}
    A & B & C\\
    D & E & F\\
    G & H & I\\
  \end{bmatrix}
\]

Note that similarly to the 1-D case we can get the values of all the corners, i.e., C, G and I, by transposing $T$ and using the formula for
$f\big((0, \frac{1}{3}), (0, \frac{1}{3})\big)$.

Next, we can compute the sum of the
probabilities in the first row

$$
A + B + C = f\big((0, \frac{1}{3}), (0, 1)\big) =
\frac{(a+b)^2}{1-(a+b)(c+d)}$$
and obtain $B$ as a result.

Note that the sum of the first row probabilities
is exactly equivalent to finding
$f(0, \frac{1}{3})$ in the 1-D case but the
initial term in
$f(0, \frac{1}{4})$ in this case is
$(a+b)^2$ (sum of first row in 2-D $T^{\otimes 2}$)
instead of $a^2$ (first entry in 1-D $T^{\otimes 2}$).

In a similar manner, we can compute

$$
G + H + I = f\big((\frac{2}{3}, 1), (0, 1)\big) =
\frac{(c+d)^2}{1-(a+b)(c+d)}
$$

$$
A + D + G = f\big((0, 1), (0, \frac{1}{3})\big) =
\frac{(a+c)^2}{1-(a+c)(b+d)}
$$

$$
C + F + I = f\big((0, 1), (\frac{2}{3}, 1)\big) =
\frac{(b+d)^2}{1-(a+c)(b+d)}
$$
and use these equations to compute
$H$, $D$, and $F$.

Finally, we can compute
the middle element $E = 1-(A + B + C + D + F + G + H + I)$ 
(up to normalizations, the entries in the sampled seed should sum to 1
but not necessarily in the original seed).
Thus, we can fully compute the $3\times 3$ seed
matrix $M$ based on the $2\times 2$
seed $T$.
Furthermore, it is degree-4 in the original variables. The
exact expression is used in Algorithm~\ref{alg:rpskg} as the
subroutine in Algorithm~\ref{alg:Sample3x3}.

\end{proof}

\subsection{The Algorithm}

Algorithm~\ref{alg:rpskg} details the procedure
used to generate a graph that gets rid of oscillations
present in the SKG model.
Parameters to this algorithm include:
$m, \ell, k, T$ that correspond to
the total number of edges generated, the number of
$2\times 2$ matrices used in the
Kronecker product,
the number of $3\times 3$ matrices used, and the
$2\times 2$ initiator seed matrix respectively.
Algorithm~\ref{alg:rpskg} uses Algorithm~\ref{alg:Sample3x3} as a sub-routine to sample from the
uniform distribution over the initiator matrix $T$ as $\ell\rightarrow\infty$.
In this section, we analyze 
both Algorithms~\ref{alg:rpskg}
and~\ref{alg:Sample3x3} for (computational) efficiency, 
ease of parallelization, and statistical impossibility when only
using $3\times 3$ seeds.

After creating the $3\times 3$ seed $M$
based on the uniform distribution over
the $2\times 2$ seed matrix,
Algorithm~\ref{alg:rpskg} generates each edge as
follows:
Choose $k$ out of $k+\ell$ random positions for
where we will insert the matrix $M$ in the
$k+\ell$ Kronecker sequence of matrices
for which we will obtain the Kronecker product.
Note that the
positions of insertions matter
because of the non-commutativity of the
Kronecker product for matrices of unequal
dimension. Then,
we simulate the Kronecker product of the $k$ copies
of the $M$ matrix and the $\ell$ copies of
the $T$ matrix. After simulation, we obtain an
edge $(u, v)$ which can then be inserted into the
generated edge set $E$.

On the other hand, Algorithm
~\ref{alg:Sample3x3} uses a closed-form
solution to obtain a $3\times 3$ seed
(with values $A, B, \ldots, H, I$)
from the
$2\times 2$ initiator seed matrix $T$. Refer to
Section~\ref{sec:degree-nm} for an explanation of
the derivation of this solution.

\begin{algorithm}
  \KwData{$m, \ell, k, T$}
  Initialize edge set $E = \{\}$\;
  $M \leftarrow $ Sample3x3($T$)\;
  \For{$j = 1 \to  k + \ell$} {
    positions[$j$] $\leftarrow 0$\;
  }
  \For{$i=1 \to m$} {
    // sampling without replacement\\
    Choose $k$
    random positions out of $k + \ell$ positions and set positions[$j$] = 1\;
    // zero-indexing used for vertex index generation\\
    $(u, v) \leftarrow (0, 0)$ \;

    \For{$j=1 \to k + \ell$}{
        \eIf{positions[$j$] = 1} {
            Pick random index $(\mu, \nu)$ of the $3\times 3$ matrix $M$ with probability proportional to its value\;$(u,v) \leftarrow (3u + \mu, 3v+ \nu)$\;
        } {
            Pick random index $(\mu, \nu)$ of the $2\times 2$ matrix $T$ with probability proportional to its value\;$(u,v) \leftarrow (2u + \mu, 2v + \nu)$\;
        }
    }

    $E \leftarrow E \cup \{(u, v)\}$\;
  }
  \Return $E$\;
  \caption{\rpskg Algorithm}
  \label{alg:rpskg}
\end{algorithm}

\begin{algorithm}
  \KwData{$T$}

  $\begin{bmatrix}
    a & b\\
    c & d\\
  \end{bmatrix} \leftarrow T$\;
  $A \leftarrow \frac{a^2 + ab \cdot \frac{(a+c)^2}{1 - (a+c)(b+d)} + ac \cdot \frac{(a+b)^2}{1- (a+b)(c+d)}}{1-ad}$\;
  $C \leftarrow \frac{b^2 + ab \cdot \frac{(b+d)^2}{1 - (a+c)(b+d)} + bd \cdot \frac{(a+b)^2}{1- (a+b)(c+d)}}{1-bc}$\;
  $G \leftarrow \frac{c^2 + cd \cdot \frac{(a+c)^2}{1 - (a+c)(b+d)} + ac \cdot \frac{(c+d)^2}{1- (a+b)(c+d)}}{1-bc}$\;
  $I \leftarrow \frac{d^2 + cd \cdot \frac{(b+d)^2}{1 - (a+c)(b+d)} + bd \cdot \frac{(c+d)^2}{1- (a+b)(c+d)}}{1-ad}$\;

  $B \leftarrow \frac{(a+b)^2}{1-(a+b)(c+d)} - (A + C)$\;
  $H \leftarrow \frac{(c+d)^2}{1-(a+b)(c+d)} - (G + I)$\;
  $D \leftarrow \frac{(a+c)^2}{1-(a+c)(b+d)} - (A + G)$\;
  $F \leftarrow \frac{(b+d)^2}{1-(a+c)(b+d)} - (C + I)$\;
  $E \leftarrow 1 - (A + B + C + D + F + G + H + I)$\;

  \Return $\begin{bmatrix}
    A & B & C\\
    D & E & F\\
    G & H & I\\
  \end{bmatrix}$\;
  \caption{Sample3x3: Used to generate $3\times 3$ matrix from the uniform distribution over $S$.}
  \label{alg:Sample3x3}
\end{algorithm}

\subsubsection{Runtime \& Parallelization}

Algorithm~\ref{alg:rpskg} is written in a serial
manner for clarity sake. But in practice, we can
parallelize the graph generation process of
\rpskg by using as many CPU/GPU cores as are available to
generate each edge.

We know that $m > \ell+k$ since
$m \geq n = 3^k\cdot2^\ell$.
When run serially, the runtime is $O(m(\ell+k))$. But
with parallelization, if we have $c$ CPU/GPU cores, then the
runtime becomes $O(\frac{m}{c}(\ell+k))$.

\subsection{Impossibility Results for Using Only $3\times 3$ Seeds}
\label{sec:threebythree}

In~\citep{Seshadhri:2013}, it is shown that the vanilla SKG model
leads to oscillations in the degree distribution, in certain
parameter regimes. The authors consider the case where a 
$2\times 2$ seed is used to generate the graph. We consider the
$3\times 3$ setting here and prove a similar result: the resulting
graph will also have oscillations. (This is not surprising since
this theoretical result is supported by our experimental results too.)

First, we recall the definition of slices in~\citep{Seshadhri:2013}:

\begin{definition}[Slices $S_r$ and Slice Probabilities $p_r$]
In the SKG model, $S_r$ denotes a \textit{slice} $r$
(with $r\in[-\ell/2, \ell/2]$)
and consists of all vertices whose binary
representations have exactly $(\ell/2 + r)$ zeros.

Furthermore, let $p_r$ be the probability that a single edge insertion
in the SKG generated graph produces an out-edge at node $v\in S_r$.
\end{definition}

Essentially, for any seed $T\in[0, 1]^2$,
every vertex in the graph generated by $T^{\otimes \ell}$ has a
corresponding
$\ell$-bit binary representation. This representation has a one-to-one
mapping to an element of the boolean hypercube $\{0, 1\}^\ell$.
We assume $\ell$ to be even for the analysis.
All vertices in the same slice have the same probability of
having an out-edge as proven by Seshadhri et al.~\cite{Seshadhri:2013}.

For purposes of discussion this section denote the $2\times 2$ seed
matrix entries as
$$
T = \begin{bmatrix}
    t_1 & t_2\\
    t_2 & t_3\\
  \end{bmatrix}.
$$

\begin{lemma}[Claim 3.3 in~\citep{Seshadhri:2013}]
For vertex $v\in S_r$ and $\ell\in\naturals$,
$$
p_r = \frac{(1-4\sigma^2)^{\ell/2}\tau^r}{n},
$$
where $\tau = \frac{1+2\sigma}{1-2\sigma}$,
$\sigma = t_1 + t_2 - 1/2$.
\end{lemma}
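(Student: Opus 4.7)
The plan is to compute $p_r$ directly from Definition~\ref{def:skg} and then do a clean algebraic rewrite so that the symmetry between the two row/column sums is absorbed into the parameter $\sigma$. First I would fix a vertex $v$ with binary label $v_1 v_2 \cdots v_\ell$, and note that by Definition~\ref{def:skg} the probability that a single edge insertion lands on the directed pair $(v, u)$ is $\prod_{i=1}^\ell T[v_i, u_i]$, independently across coordinates. Summing over all possible endpoints $u$ therefore factors over coordinates:
\begin{equation*}
p_v \;=\; \sum_{u \in \{0,1\}^\ell} \prod_{i=1}^\ell T[v_i, u_i] \;=\; \prod_{i=1}^\ell \Bigl(\sum_{u_i \in \{0,1\}} T[v_i, u_i]\Bigr).
\end{equation*}

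Next I would evaluate the two possible per-coordinate row sums. Using the symmetry of $T$, the $v_i = 0$ row sums to $t_1 + t_2$ and the $v_i = 1$ row sums to $t_2 + t_3$. If $v \in S_r$ then by definition $v$ has exactly $\ell/2 + r$ zeros and $\ell/2 - r$ ones, so
\begin{equation*}
p_v \;=\; (t_1 + t_2)^{\ell/2 + r}\,(t_2 + t_3)^{\ell/2 - r}.
\end{equation*}
In particular $p_v$ depends only on $r$, recovering that all vertices in the same slice share a common probability; call this common value $p_r$.

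Now I would introduce $\sigma = t_1 + t_2 - 1/2$. Under the standard normalization $t_1 + 2 t_2 + t_3 = 1$, the two row sums satisfy $(t_1 + t_2) + (t_2 + t_3) = 1$, so $t_2 + t_3 = 1/2 - \sigma$. Factoring a $1/2$ out of each of the $\ell$ terms yields
\begin{equation*}
p_r \;=\; \frac{1}{2^\ell}\,(1 + 2\sigma)^{\ell/2 + r}(1 - 2\sigma)^{\ell/2 - r}.
\end{equation*}
Finally I would group the $\ell/2$ symmetric factors using $(1+2\sigma)(1-2\sigma) = 1 - 4\sigma^2$ and collect the remaining $r$ factors as $\tau^r$ with $\tau = (1+2\sigma)/(1-2\sigma)$, and use $n = 2^\ell$, obtaining
\begin{equation*}
p_r \;=\; \frac{(1 - 4\sigma^2)^{\ell/2}\,\tau^r}{n},
\end{equation*}
as claimed. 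There is no real obstacle here: the argument is essentially the observation that row sums factor over Kronecker coordinates, together with one change of variables. The only thing to be careful about is the normalization convention on $T$ (so that $(t_1+t_2)+(t_2+t_3)=1$) and the sign convention in the definition of $S_r$, both of which are fixed earlier in the excerpt.
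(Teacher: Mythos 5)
Your derivation is correct; the paper itself does not prove this lemma but imports it verbatim as Claim 3.3 of Seshadhri et al., and your argument --- factoring the row sum of $T^{\otimes \ell}$ over Kronecker coordinates, counting $\ell/2+r$ zero-rows versus $\ell/2-r$ one-rows, and then changing variables to $\sigma$ via $(t_1+t_2)+(t_2+t_3)=1$ --- is essentially the standard one from that source. The normalization caveat you flag is the right one, and the algebra $(1+2\sigma)^{\ell/2+r}(1-2\sigma)^{\ell/2-r}=(1-4\sigma^2)^{\ell/2}\tau^r$ with $n=2^\ell$ checks out.
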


\begin{lemma}[Lemma 3.5 in~\citep{Seshadhri:2013}]
Let $v$ be a vertex in slice $r$. Assume that $p_r\leq 1/\sqrt{m}$ and
$d = o(\sqrt{n})$. Then for the original SKG model,
$$
\pr[\deg(v) = d] = (1+o(1))\frac{\lambda^d}{d!}\frac{(\tau^r)^d}{\exp(\lambda\tau^r)},
$$
where $\lambda = \Delta(1 - 4\sigma^2)^{\ell/2}$.
\label{lem:oscillations2times2}
\end{lemma}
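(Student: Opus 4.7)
The key observation is that the out-degree of a fixed vertex $v$ in the SKG model is a sum of $m$ independent Bernoulli indicators, one per edge insertion, each with success probability $p_r$ by the definition of the slice probability. Hence $X := \deg(v) \sim \mathrm{Binom}(m, p_r)$ and
$$
\pr[X = d] = \binom{m}{d} p_r^d (1-p_r)^{m-d}.
$$
The entire lemma therefore reduces to a Poisson approximation: show that this Binomial pmf matches the Poisson pmf with mean $m p_r$ up to a $(1+o(1))$ multiplicative factor, and then substitute the formula $p_r = (1-4\sigma^2)^{\ell/2}\tau^r / n$ from the preceding claim so that $m p_r = (m/n)(1-4\sigma^2)^{\ell/2}\tau^r = \Delta(1-4\sigma^2)^{\ell/2}\tau^r = \lambda \tau^r$.

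To establish the Poisson approximation, I would split the expression into two factors and handle each separately. For the binomial-and-$p_r^d$ piece,
$$
\binom{m}{d} p_r^d = \frac{(m p_r)^d}{d!} \prod_{i=0}^{d-1}\Bigl(1 - \tfrac{i}{m}\Bigr),
$$
and the product equals $\exp\bigl(-\tfrac{d(d-1)}{2m} + O(d^3/m^2)\bigr) = 1 + o(1)$ whenever $d = o(\sqrt{m})$, which follows from the hypothesis $d = o(\sqrt{n})$ together with the SKG regime $m = \Theta(n)$ used in Graph500. For the $(1-p_r)^{m-d}$ piece, a Taylor expansion $\ln(1-p_r) = -p_r - p_r^2/2 + O(p_r^3)$ gives
$$
(1-p_r)^{m-d} = e^{-m p_r} \cdot \exp\Bigl(d\, p_r - \tfrac{1}{2} m p_r^2 + O(m p_r^3) + O(d p_r^2)\Bigr),
$$
and the terms $d\,p_r$, $m p_r^3$, $d p_r^2$ are all $o(1)$ under the hypotheses $p_r \leq 1/\sqrt{m}$ and $d = o(\sqrt{m})$.

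The main technical obstacle is the $\tfrac{1}{2} m p_r^2$ term in the last display: the bare hypothesis $p_r \leq 1/\sqrt{m}$ only yields $m p_r^2 \leq 1$, not $o(1)$, so the direct expansion does not by itself give a $(1+o(1))$ multiplicative error. I would close this gap in one of two ways. First, appealing to the classical Le Cam bound $\tv(\mathrm{Binom}(m, p_r), \mathrm{Poisson}(m p_r)) \leq m p_r^2$, combined with the Poisson lower tail on the window $d = o(\sqrt{n})$ (where the Poisson pmf at parameter $\lambda\tau^r$ remains at least a fixed inverse polynomial) converts the $O(1)$ additive slack into a $(1+o(1))$ relative one. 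Second, one may observe that in the regime of interest $\lambda\tau^r = m p_r$ stays bounded, so $p_r = O(1/m)$ in fact, comfortably dominating the $p_r \leq 1/\sqrt{m}$ hypothesis and forcing $m p_r^2 = o(1)$ directly. Either route combines with the preceding two factors to give $\pr[\deg(v) = d] = (1+o(1))(m p_r)^d e^{-m p_r}/d!$, and substituting $m p_r = \lambda \tau^r$ yields the stated expression.
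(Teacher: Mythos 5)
Your main line --- writing $\deg(v)\sim\mathrm{Binom}(m,p_r)$ and performing the Poisson approximation by treating $\binom{m}{d}p_r^d$ and $(1-p_r)^{m-d}$ separately --- is exactly the paper's approach: the paper does not prove this lemma itself (it is imported verbatim from Seshadhri et al.), but its proof of the analogous $3\times 3$ statement (Claim~\ref{eq:deg3x3}) is precisely your first two steps, namely $\binom{m}{d}\approx m^d/d!$ from $d=o(\sqrt{n})$ and $(1-p_r)^{m-d}=(1\pm o(1))e^{-p_r(m-d)}$ from $p_r\leq 1/\sqrt{m}$. Your flagging of the $\tfrac{1}{2}mp_r^2$ term is a genuine improvement in care --- under the literal hypothesis $p_r\leq 1/\sqrt{m}$ that term is only $O(1)$, so the asserted $(1\pm o(1))$ exponential approximation does not follow, a point the paper silently elides --- but neither of your two patches actually closes it. Le Cam gives $\tv(\mathrm{Binom}(m,p_r),\mathrm{Poisson}(mp_r))\leq mp_r^2\leq 1$, which is vacuous, and an $O(1)$ additive total-variation error cannot be upgraded to a $(1+o(1))$ multiplicative error at points where the pmf is only inverse-polynomially large; and the claim that $\lambda\tau^r=mp_r$ ``stays bounded'' is unjustified, since this is the expected degree of slice $r$ and under the stated hypothesis it may be as large as $\sqrt{m}$. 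The honest repair is to strengthen the hypothesis to $p_r=o(1/\sqrt{m})$ (equivalently $mp_r^2=o(1)$), under which your direct expansion goes through and the proof is complete; as written, your argument shares the same gap as the paper's own treatment rather than resolving it.
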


Then as explained in~\citep{Seshadhri:2013},
Lemma~\ref{lem:oscillations2times2} implies oscillations in the
degree distribution since the lemma implies that the
probability that a vertex in slice $r$ has outdegree $d$ is
$$
\pr[\deg(v) = d] = (1+o(1))\frac{\exp(dr-e^r)}{d!},
$$
for $\lambda = 1$ and $\tau = e$.
Then since $r$ is discrete (not \textit{continuous!}) then for
$r = \lfloor \ln d\rceil$, $|r - \ln d|$ would fluctuate between
0 and 1/2, leading to an exponential tail when it is close to 1/2 and
no exponential tail when it is close to 0. This directly leads to
oscillations in the degree distribution.

We now derive a similar observation for the $3\times 3$ seed case.
First, we define some terms and parameters:

\begin{definition}
A slice $S_{\alpha,\beta}$ is the set of vertices
with $\alpha$ 0s, $\beta$ 1s, and $\ell-\alpha-\beta$
2s in the ternary representation of each vertex.

Furthermore, let $p_{\alpha, \beta}$ be the probability that a single edge insertion
in the SKG generated graph produces an out-edge at node $v\in S_{\alpha, \beta}$.
And let $p_\alpha = \sum_\beta p_{\alpha, \beta}$.
\end{definition}

First, let us define some parameters:
\begin{itemize}
\item $\sigma = t_1 + t_2 - 1/2$.
\item
$\tau = \frac{1/2+\sigma}{1/2 - \sigma}$,
\item
$\Delta = \frac{m}{n}$,
\item
$\Lambda = \Delta\left(\frac{3(1-2\sigma)^2}{3+4\sigma^2}\right)^\ell$.
\end{itemize}

As before,
$o(1)$ represents a negligible quantity as
$\ell\rightarrow\infty$.

Here, we discuss theory when only
$3\times 3$ seeds are used and prove that
(unwanted) oscillations are present in the
degree distribution of the generated graph.

This case is similar to the case where only $2\times 2$
seeds are used since
\begin{itemize}
\item Oscillations are present in
the degree distribution; and
\item Vertices belong to a single \textit{slice} throughout
the graph generation process.
\end{itemize}

The analysis for the 
$2\times 2$ case~\cite{Seshadhri:2013} used a binary
representation to represent each vertex.
For the $3\times 3$ case, we use a ternary representation
instead.
Recall that we sample a $3\times 3$ seed matrix (we
termed $M$) from
the uniform distribution over the $2\times 2$ matrix
(we termed $T$).
Again, we assume that $T$ is symmetric to make the analysis
clearer. Then the probabilities of an edge falling in
the first, second, and third rows/columns of $M$ are:
$$
\frac{(1/2 + \sigma)^2}{3/4+\sigma^2},
$$
$$
\frac{(1/2 + \sigma)(1/2 - \sigma)}{3/4+\sigma^2},
$$
and
$$
\frac{(1/2 - \sigma)^2}{3/4+\sigma^2}.
$$
respectively. We derive these values as follows.
Recall that, in Algorithm~\ref{alg:Sample3x3}, we sample the $3\times 3$ seed $M$ from a 
$2\times 2$ seed $T$:
$$
M = \begin{bmatrix}
    m_1 & m_2 & m_3\\
    m_2 & m_4 & m_5\\
    m_3 & m_5 & m_6\\
  \end{bmatrix}.
$$
where the probability of an edge becoming an out-edge
of a vertex in the first third is
\begin{align}
m_1 + m_2 + m_3
&= \frac{(t_1+t_2)^2}{1-(t_1 + t_2)(t_2 + t_3)}\\
&= \frac{(1/2 + \sigma)^2}{1-(1/2 + \sigma)(1/2 - \sigma)}\\
&= \frac{(1 + 2\sigma)^2}{3 + 4\sigma^2}
\end{align}
Similarly, the probabilities for the second and
third parts are
\begin{align}
m_2 + m_4 + m_5
&= \frac{(1/2 + \sigma)(1/2 - \sigma)}{1-(1/2 + \sigma)(1/2 - \sigma)}\\
&= \frac{1 - 4\sigma^2}{3 + 4\sigma^2}
\end{align}
and
\begin{align}
m_3 + m_5 + m_6
&= \frac{(1/2 - \sigma)^2}{1-(1/2 + \sigma)(1/2 - \sigma)}\\
&= \frac{(1 - 2\sigma)^2}{3 + 4\sigma^2}
\end{align}
respectively.
Note that
these three probabilities correspond to the $0_3, 1_3$
and $2_3$
ternary digit representations of a vertex in a specific
level of the edge generation process.

\begin{claim}
For any $v\in S_{\alpha, \beta}$,
let $p_{\alpha, \beta}$ be the probability that a single
edge insertion in SKG (on the $3\times 3$ seed) produces an
out-edge at node $v$. Then
$$
p_{\alpha, \beta} = \frac{\tau^{2\alpha + \beta}\Lambda}{\Delta n}.
$$
\label{claim:palphabeta}
\end{claim}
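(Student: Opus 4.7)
The plan is to obtain $p_{\alpha,\beta}$ by directly marginalizing the single-edge sampling process over the column choices at each of the $\ell$ levels, leaving a product of row sums of $M$ indexed by the ternary digits of $v$. Since only $3\times 3$ seeds are used in this section, every vertex $v$ has a unique length-$\ell$ ternary label $(d_1,\ldots,d_\ell)$, and the recursive edge-generation process chooses, at level $k$, a row-column pair $(\mu_k,\nu_k)$ of $M$ with probability proportional to $M[\mu_k,\nu_k]$. The source vertex depends only on the row choices, so the marginal probability that a single edge has source $v$ equals $\prod_{k=1}^\ell r_{d_k}$, where $r_i = \sum_j M[i,j]$ is the $i$-th row sum of $M$.

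The row sums $r_0, r_1, r_2$ have already been computed in the passage preceding the claim. Introducing the shorthand $x = 1/2+\sigma$, $y = 1/2-\sigma$, and $D = 3/4+\sigma^2$, they read $r_0 = x^2/D$, $r_1 = xy/D$, $r_2 = y^2/D$, and one has $xy = 1/4-\sigma^2$ and $\tau = x/y$. For any $v\in S_{\alpha,\beta}$ the product collapses to
\[
p_{\alpha,\beta} \;=\; r_0^{\alpha}\, r_1^{\beta}\, r_2^{\ell-\alpha-\beta}
\;=\; \frac{x^{2\alpha+\beta}\, y^{2\ell-2\alpha-\beta}}{D^{\ell}}.
\]
I would then factor out $(xy)^\ell$ to extract the desired $\tau$ dependence:
\[
\frac{x^{2\alpha+\beta}\, y^{2\ell-2\alpha-\beta}}{D^\ell}
\;=\; \frac{(xy)^\ell}{D^\ell}\,\bigl(x/y\bigr)^{2\alpha+\beta-\ell}
\;=\; \frac{(1/4-\sigma^2)^\ell}{D^\ell}\,\tau^{2\alpha+\beta}\,\tau^{-\ell}.
\]

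Finally, the identification with the claimed expression is a matching of normalization constants. Using $n = 3^\ell$ (only $3\times 3$ seeds are used), the definition $\Lambda = \Delta\bigl(3(1-2\sigma)^2/(3+4\sigma^2)\bigr)^\ell$, and the identity $(1-2\sigma)^2 = 4y^2$, one checks
\[
\frac{\Lambda}{\Delta\, n}
\;=\; \frac{1}{3^\ell}\left(\frac{3\cdot 4 y^2}{4D}\right)^{\ell}
\;=\; \frac{y^{2\ell}}{D^\ell}
\;=\; \frac{(xy)^\ell}{D^\ell}\,\tau^{-\ell},
\]
so that $p_{\alpha,\beta} = \tau^{2\alpha+\beta}\,\Lambda/(\Delta\,n)$, as required.

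The only obstacle is really the bookkeeping: one has to verify carefully that the three row sums of the sampled $3\times 3$ seed $M$ indeed factor as $x^2/D$, $xy/D$, $y^2/D$ (these come from the closed-form derivations in the preceding subsection and from the assumed symmetry of $T$), and that the normalization $\Lambda/(\Delta n)$ absorbs exactly the $(xy)^\ell\tau^{-\ell}/D^\ell$ factor. There is no probabilistic subtlety beyond the independence of levels in a single edge insertion, which is immediate from Definition~\ref{def:skg}.
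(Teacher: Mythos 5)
Your proposal is correct and follows essentially the same route as the paper: both write $p_{\alpha,\beta}$ as the product of the three row-sum probabilities $\bigl((1/2+\sigma)^2/D\bigr)^{\alpha}\bigl((1/2+\sigma)(1/2-\sigma)/D\bigr)^{\beta}\bigl((1/2-\sigma)^2/D\bigr)^{\ell-\alpha-\beta}$ and then algebraically extract $\tau^{2\alpha+\beta}$, identifying the leftover factor with $\Lambda/(\Delta n)$ via $n=3^\ell$. Your intermediate factoring through $(xy)^\ell\tau^{-\ell}$ rather than pulling out $(1/2-\sigma)^{2\ell}$ directly is a cosmetic difference, and your explicit marginalization over column choices just spells out what the paper leaves implicit.
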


\begin{proof}

For an edge to be produced at $v\in S_{\alpha, \beta}$, it must
be that at every level, the first row/column of $M$ is considered
$\alpha$ times, the second is considered $\beta$ times, and
the third is considered $\ell-(\alpha+\beta)$ times. Then,

\begin{align}
p_{\alpha, \beta}
&=\frac{\left(\frac{1}{2} + \sigma\right)^{2\alpha}\left(\frac{1}{2} + \sigma\right)^\beta\left(\frac{1}{2}-\sigma\right)^\beta\left(\frac{1}{2}-\sigma\right)^{2(\ell-\alpha-\beta)}}{\left(\frac{3}{4} + \sigma^2\right)^\ell}\\
&=\frac{\left(\frac{1}{2} + \sigma\right)^{2\alpha + \beta}}{\left(\frac{1}{2} - \sigma\right)^{2\alpha + \beta}}
\cdot \left(\frac{\left(\frac{1}{2} - \sigma\right)^2}{\frac{3}{4} + \sigma^2}\right)^\ell
\\
&= \frac{\tau^{2\alpha + \beta}}{n}\cdot
\left(\frac{3(1-2\sigma)^2}{3+4\sigma^2}\right)^\ell\\
&=\frac{\tau^{2\alpha + \beta}\Lambda}{\Delta n}
\end{align}
since $n=3^\ell$.

\end{proof}

\begin{claim}
\label{eq:deg3x3}
Consider any vertex $v\in S_{\alpha, \beta}$. Let $d = o(\sqrt{n})$,
$p_\alpha \leq 1/\sqrt{m}$.
Then,
$$
\Prob{\text{deg}(v) = d} =
(1\pm o(1))\frac{1}{d!}\frac{(\tau^{2\alpha + \beta}\Lambda)^d}{\exp(\tau^{2\alpha + \beta}\Lambda)}
$$
\end{claim}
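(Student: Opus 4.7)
The plan is to observe that $\deg(v)$ is exactly a $\mathrm{Binomial}(m, p_{\alpha,\beta})$ random variable and then apply the classical Poisson approximation to recover the stated density with rate $\mu := m p_{\alpha,\beta} = \tau^{2\alpha+\beta}\Lambda$, where the equality uses Claim~\ref{claim:palphabeta} together with the identity $\Delta = m/n$.

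The binomial representation is immediate: each of the $m$ edge insertions is independent, and by Claim~\ref{claim:palphabeta} it produces an out-edge at $v$ with probability $p_{\alpha,\beta}$, so
\begin{equation}
\Prob{\deg(v) = d} = \binom{m}{d} p_{\alpha,\beta}^d (1-p_{\alpha,\beta})^{m-d}.
\end{equation}
I would then split the Poisson approximation into two steps. First, write $\binom{m}{d} p_{\alpha,\beta}^d = \frac{\mu^d}{d!} \prod_{i=0}^{d-1}(1 - i/m)$; since $d = o(\sqrt{n})$ and $m \geq n$, the product equals $\exp(-O(d^2/m)) = 1 + o(1)$. Second, expand $(1-p_{\alpha,\beta})^{m-d} = \exp\bigl((m-d)\ln(1-p_{\alpha,\beta})\bigr)$ via the Taylor series of $\ln$. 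The leading term recovers $-\mu + d\,p_{\alpha,\beta}$, where $d\, p_{\alpha,\beta} = o(1)$ from the hypotheses $d = o(\sqrt{n})$ and $p_{\alpha,\beta} \leq p_\alpha \leq 1/\sqrt{m}$ (combined with $m\geq n$). The higher-order terms must be absorbed into a $1+o(1)$ factor. Multiplying the two approximations produces $(1\pm o(1))\frac{\mu^d}{d!}e^{-\mu}$, which is exactly the claim.

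The main obstacle is the quadratic-error control in the second step: the bound $p_\alpha \leq 1/\sqrt{m}$ only gives $(m-d)p_{\alpha,\beta}^2 \leq 1$, which is $O(1)$ rather than visibly $o(1)$. To close the gap, I would exploit that $p_\alpha = \sum_\beta p_{\alpha,\beta}$ is a sum over $\Theta(\ell)$ strictly positive slice probabilities, so $p_{\alpha,\beta}$ is smaller than $1/\sqrt{m}$ by a factor diverging with $\ell$, which in turn makes $(m-d)p_{\alpha,\beta}^2 = o(1)$ and forces the full tail $\sum_{k\geq 2}(m-d)p_{\alpha,\beta}^k/k$ to vanish as $\ell\to\infty$. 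With this tightening in place, the argument is the natural generalization of the proof of Lemma~\ref{lem:oscillations2times2} from binary to ternary digit representations, with the role of $\tau^r$ and $\lambda$ taken by $\tau^{2\alpha+\beta}$ and $\Lambda$ respectively.
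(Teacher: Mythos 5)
Your argument is the same as the paper's: write $\Prob{\deg(v)=d}=\binom{m}{d}p_{\alpha,\beta}^d(1-p_{\alpha,\beta})^{m-d}$, replace $\binom{m}{d}$ by $(1+o(1))m^d/d!$ using $d=o(\sqrt n)$, and replace $(1-p_{\alpha,\beta})^{m-d}$ by $(1\pm o(1))e^{-(m-d)p_{\alpha,\beta}}$, absorbing $e^{dp_{\alpha,\beta}}=1+o(1)$; the paper does exactly this and simply asserts the Taylor approximation under the hypothesis $p_{\alpha,\beta}\le p_\alpha\le 1/\sqrt m$, without attempting to control the quadratic term beyond that.

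The one place you diverge is your proposed repair of that quadratic term, and the repair does not work as stated. You argue that because $p_\alpha=\sum_\beta p_{\alpha,\beta}$ is a sum of $\Theta(\ell)$ positive terms, each individual $p_{\alpha,\beta}$ must be smaller than $1/\sqrt m$ by a factor diverging with $\ell$. But by Claim~\ref{claim:palphabeta}, $p_{\alpha,\beta}\propto\tau^{2\alpha+\beta}$, so for fixed $\alpha$ the terms form a geometric progression in $\beta$; whenever $\tau\ne 1$ the sum is dominated by its largest term, and for the extremal $\beta$ one only gains the constant factor $1-\tau^{-1}$ (or its analogue for $\tau<1$), not an $\omega(1)$ factor. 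So your tightening only yields $(m-d)p_{\alpha,\beta}^2=O(1)$ rather than $o(1)$ for the top slices, which is exactly the bound you started from. This is a genuine loose end, but it is equally present in the paper's own proof (and in the $2\times 2$ analogue it imports from Seshadhri et al.); your main line of argument is otherwise the paper's proof verbatim.
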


\begin{proof}

Observe that the outdegree of any vertex $v$ follows a binomial distribution, i.e., the probability that $v$ has outdegree $d$ is $\binom{m}{d}p_{\alpha, \beta}^d(1-p_{\alpha,\beta})^{m-d}$ where
$m$ is the total number of edges.

Now we can approximate $\binom{m}{d}$ by $m^d/d!$, since $d = o(\sqrt{n})$ (using Stirling's approximation for the factorial). We also use the Taylor series
approximation that $(1-x)^{m-d} = (1\pm o(1)) e^{-x(m-d)}$
for $x \leq 1/\sqrt{m}$.
for 
Last, observe that $p_{\alpha,\beta} \leq
\sum_{\beta}p_{\alpha, \beta} = p_\alpha \leq 1/\sqrt{m}$.

Then by Claim~\ref{claim:palphabeta},
\begin{align}
\Prob{\text{deg}(v) = d}
&= \binom{m}{d}p_{\alpha, \beta}^d(1-p_{\alpha,\beta})^{m-d} \\
&= (1\pm o(1))\frac{m^d}{d!}\left(\frac{\tau^{2\alpha + \beta}\Lambda}{\Delta n}\right)^d\exp\left(-\frac{\tau^{2\alpha + \beta}\Lambda(m-d)}{\Delta n}\right)\\
&= (1\pm o(1))\frac{1}{d!}\left(\frac{\tau^{2\alpha + \beta}\Lambda m}{\Delta n}\right)^d\exp\left(-\frac{\tau^{2\alpha + \beta}\Lambda m}{\Delta n}\right)\exp(p_{\alpha, \beta}d)\\
&= (1\pm o(1))\frac{1}{d!}\frac{(\tau^{2\alpha + \beta}\Lambda)^d}{\exp(\tau^{2\alpha + \beta}\Lambda)}
\end{align}

where $\exp(dp_{\alpha, \beta}) = o(1)$ since
$dp_{\alpha, \beta} = o(1)$ because $d = o(\sqrt{n})$.

\end{proof}

\subsubsection*{Exponential Tails in the Degree Distribution}

Seshadhri et al.~\cite{Seshadhri:2013} showed that
the degree distribution obtained from a
$2\times 2$ seed matrix oscillates between a
lognormal and exponential tail distribution.
We can do the same but for $3\times 3$ seeds.

Specifically, let us set $\tau = e$ and
$\Lambda = 1$ in the equation in
Claim~\ref{eq:deg3x3}:

\begin{theorem}
Assume that vertex $v\in S_{\alpha, \beta}$. Then,
$$
\pr[\deg(v) = d] = (1\pm o(1))\frac{\exp(d(2\alpha + \beta) - e^{2\alpha + \beta})}{d!}
$$
\label{thm:oscillations}
\end{theorem}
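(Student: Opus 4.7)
The plan is to obtain Theorem~\ref{thm:oscillations} as a direct specialization of Claim~\ref{eq:deg3x3}, so most of the work has already been done; what remains is (i) to justify that the parameter choice $\tau=e$, $\Lambda=1$ is actually realizable by a legitimate SKG instance, and (ii) to perform the algebraic substitution cleanly.

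First, I would verify the feasibility of the parameter choice. Recall $\tau=\frac{1/2+\sigma}{1/2-\sigma}$ with $\sigma=t_1+t_2-1/2$. Setting $\tau=e$ amounts to solving $\frac{1/2+\sigma}{1/2-\sigma}=e$, which yields the explicit $\sigma=\frac{e-1}{2(e+1)}\in(0,1/2)$, and this in turn is achieved by any symmetric $2\times 2$ stochastic seed with $t_1+t_2=1/2+\sigma$. Next, $\Lambda=\Delta\bigl(\frac{3(1-2\sigma)^2}{3+4\sigma^2}\bigr)^{\ell}$ with $\Delta=m/n$; since $\sigma$ is now fixed, one chooses the edge count $m$ so that $\Lambda=1$ (equivalently $\Delta=\bigl(\frac{3+4\sigma^2}{3(1-2\sigma)^2}\bigr)^{\ell}$). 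Finally I would check that the hypotheses of Claim~\ref{eq:deg3x3} (namely $d=o(\sqrt n)$ and $p_\alpha\le 1/\sqrt m$) are consistent with this parameter regime for $\ell$ sufficiently large, which follows since $p_{\alpha,\beta}=\tau^{2\alpha+\beta}\Lambda/(\Delta n)$ decays with $n$ once $\Delta$ and $\Lambda$ are fixed.

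Having justified the parameter choice, I would then simply substitute $\tau=e$ and $\Lambda=1$ into Claim~\ref{eq:deg3x3}:
\begin{align}
\pr[\deg(v)=d]
&=(1\pm o(1))\frac{1}{d!}\frac{(\tau^{2\alpha+\beta}\Lambda)^d}{\exp(\tau^{2\alpha+\beta}\Lambda)}\\
&=(1\pm o(1))\frac{1}{d!}\frac{(e^{2\alpha+\beta})^d}{\exp(e^{2\alpha+\beta})}\\
&=(1\pm o(1))\frac{\exp\bigl(d(2\alpha+\beta)-e^{2\alpha+\beta}\bigr)}{d!},
\end{align}
which is exactly the claimed identity. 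The $(1\pm o(1))$ factor propagates unchanged since it is already part of the conclusion of Claim~\ref{eq:deg3x3}.

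The only real obstacle here is conceptual rather than computational: I should ensure that the chosen values of $\sigma$ and $\Delta$ really correspond to an admissible SKG initiator (entries in $[0,1]$ summing appropriately) and an admissible edge count, and that $p_{\alpha,\beta}\le 1/\sqrt{m}$ holds in the relevant range of $\alpha,\beta$. Once this is verified, the remark following the theorem in the $2\times 2$ case carries over verbatim: since $2\alpha+\beta$ is integer-valued while $\ln d$ is generally not, the quantity $|2\alpha+\beta-\ln d|$ oscillates as $d$ varies, producing an exponential tail when this gap is bounded away from zero and a much heavier concentration when $\ln d$ is near an integer of the form $2\alpha+\beta$. This is precisely the oscillation phenomenon that rules out a clean power-law/lognormal from $3\times 3$ seeds alone.
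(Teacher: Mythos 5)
Your proposal is correct and follows essentially the same route as the paper, which obtains the theorem by setting $\tau=e$ and $\Lambda=1$ in Claim~\ref{eq:deg3x3} and simplifying. The only addition on your part is the explicit check that $\tau=e$ and $\Lambda=1$ are realizable by an admissible seed and edge count, which the paper leaves implicit; this is a reasonable bit of extra care but does not change the argument.
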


We obtain that the probability
of a vertex in slice $\alpha, \beta$ having
outdegree $d$ is
$$
\Prob{\text{deg}(v) = d} = (1\pm o(1))\frac{\exp(d(2\alpha + \beta) - e^{2\alpha + \beta})}{d!}
$$
which is the same as Equation 1 in~\cite{Seshadhri:2013} except that we replace
parameter $r$ with $2\alpha + \beta$ instead.
Applying Taylor approximations to appropriate
ranges of $2\alpha + \beta$ (we can treat as
one parameter), it can be shown that a suitable
approximation of the probability of slice
$\alpha, \beta$ having degree $d$ is roughly
$\exp(-d(2\alpha + \beta - \ln d)^2)$. Therefore only
vertices in slice $\alpha, \beta$ such that
$2\alpha + \beta \approx \ln d$ have a good chance
of having degree $d$.

Theorem~\ref{thm:oscillations} implies that using only
$3\times 3$ seeds still results in oscillations (even with
$\rpskg$). This justifies the use of another approach:
we introduce the mixing of $2\times 2$ and $3\times 3$ seeds.
Our experimental validation, in the next section,
shows that mixing seeds of relative
prime dimensions gets rids of oscillations.

\section{Experimental Validation}
\label{sec:exp}

\begin{figure}
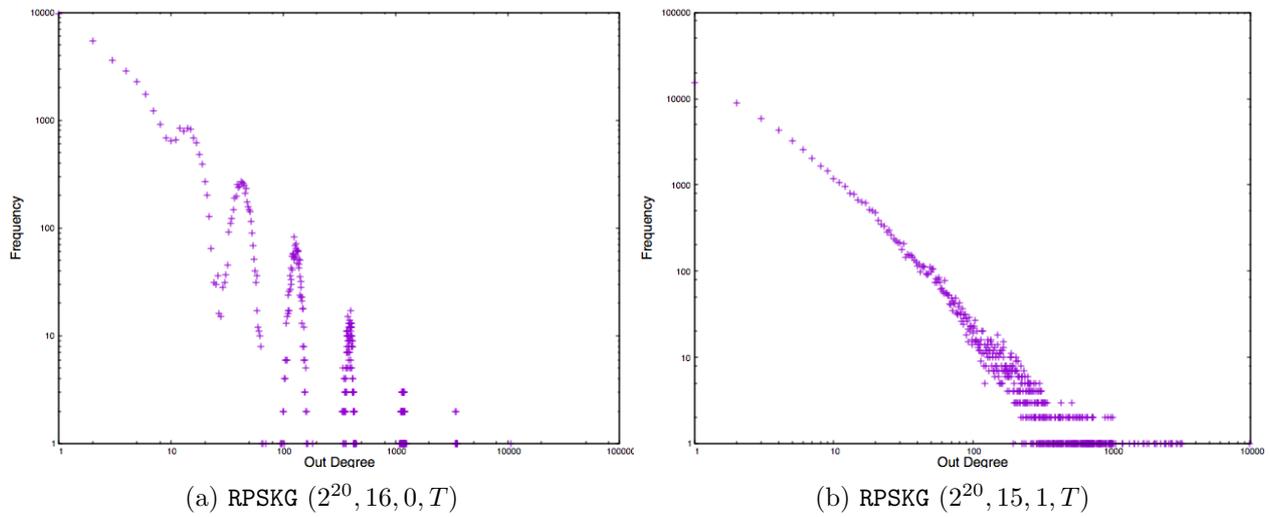

    \begin{subfigure}[b]{0.5\textwidth}
        \includegraphics[width=\textwidth]{figs/graph_m=1048576_l=16_s2=16}
        \caption{$\rpskg(2^{20}, 16, 0, T)$}
        \label{fig:s3=0}
    \end{subfigure}
    \begin{subfigure}[b]{0.5\textwidth}
        \includegraphics[width=\textwidth]{figs/graph_m=1048576_l=16_s2=15_s3=1}
        \caption{$\rpskg(2^{20}, 15, 1, T)$}
        \label{fig:s3=1}
    \end{subfigure}
    \label{fig:i0}
    \caption{Difference in degree distribution
    when only $2\times 2$ seeds are used
    versus when
    $3\times 3$ seeds are used with
     $2\times 2$ seeds.}
\end{figure}

\begin{figure}
    \begin{subfigure}[b]{0.5\textwidth}
        \includegraphics[width=\textwidth]{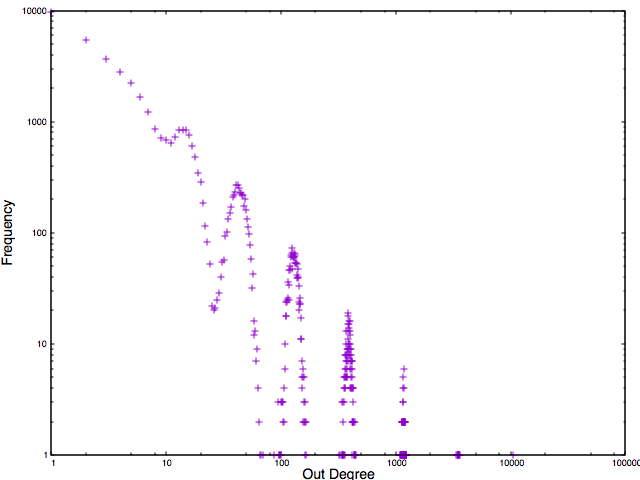}
        \caption{$\rpskg(2^{20}, 0, 16, T)$}
        \label{fig:s2=0}
    \end{subfigure}
    \begin{subfigure}[b]{0.5\textwidth}
        \includegraphics[width=\textwidth]{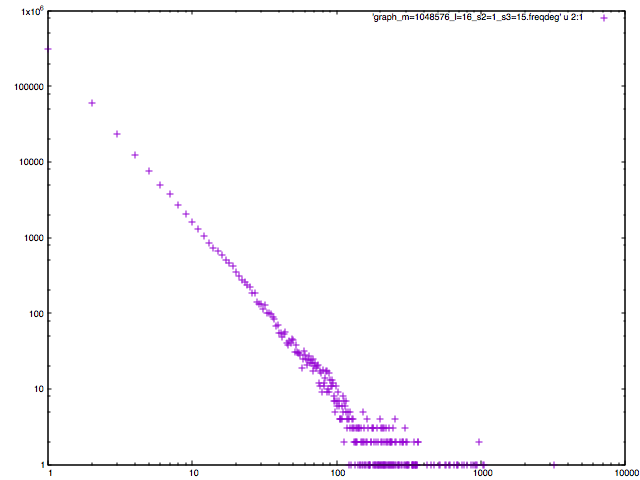}
        \caption{$\rpskg(2^{20}, 1, 15, T)$}
        \label{fig:s2=1}
    \end{subfigure}
    \label{fig:i1}
    \caption{Difference in degree distribution
    when only $3\times 3$ seeds are used
    versus when only one
    $2\times 2$ seed is used with
    multiple $3\times 3$ seeds}
\end{figure}

In this section, we discuss some of the experiments
we performed. We focus on comparing some of the
graphs generated by the SKG model to those
generated by the \rpskg model.
Note that
when $k=0$, Algorithm~\ref{alg:rpskg} is equivalent to
the SKG model --- when the $3\times 3$
seed is not used in any iteration of the edge
generation.

For our experiments, we applied
Algorithm~\ref{alg:Sample3x3} to the
Graph500 seed (see Equation~\ref{eq:Graph500})
to obtain the following
$3\times 3$ seed (approximated to 4 decimal places):

$$
M =
 \begin{bmatrix}
    0.4793 & 0.1598  & 0.0533\\
    0.1598 & 0.0533 & 0.0178\\
    0.0533 & 0.0178 & 0.0059\\
  \end{bmatrix}
$$

We note that users can
sample a $3\times 3$
seed from the uniform distribution over an
\textit{arbitrary} $2\times 2$ seed.
The procedure for producing the seed above is thus
flexible enough to accommodate additional graph
characteristics.

\begin{figure}
    \begin{subfigure}[b]{0.5\textwidth}
        \includegraphics[width=\textwidth]{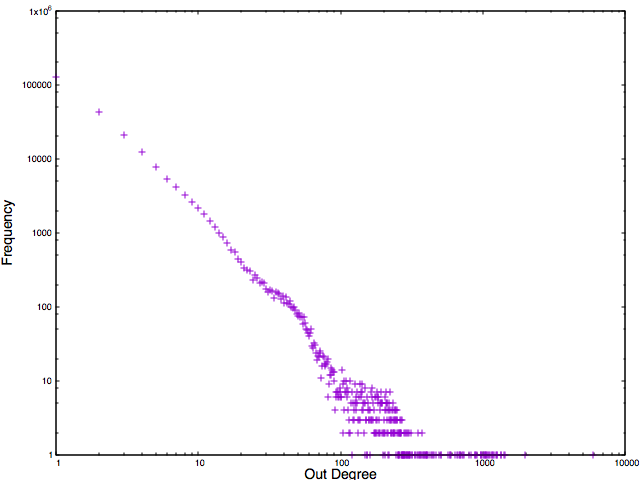}
        \caption{$\rpskg(2^{20}, 9, 7, T)$}
        \label{fig:s3=7}
    \end{subfigure}
    \begin{subfigure}[b]{0.5\textwidth}
        \includegraphics[width=\textwidth]{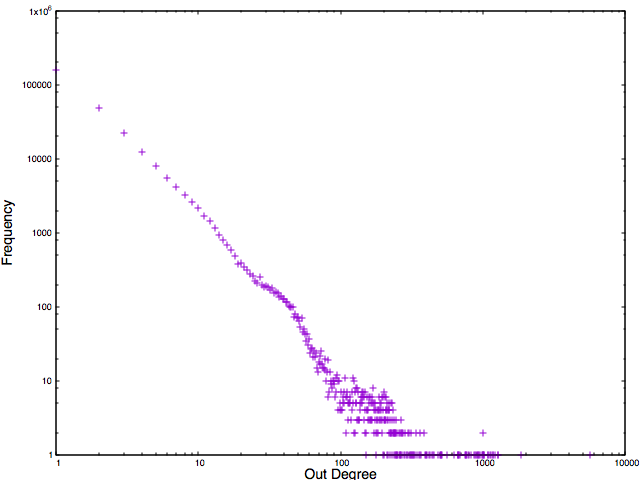}
        \caption{$\rpskg(2^{20}, 8, 8, T)$}
        \label{fig:s3=8}
    \end{subfigure}
    \begin{subfigure}[b]{0.5\textwidth}
        \includegraphics[width=\textwidth]{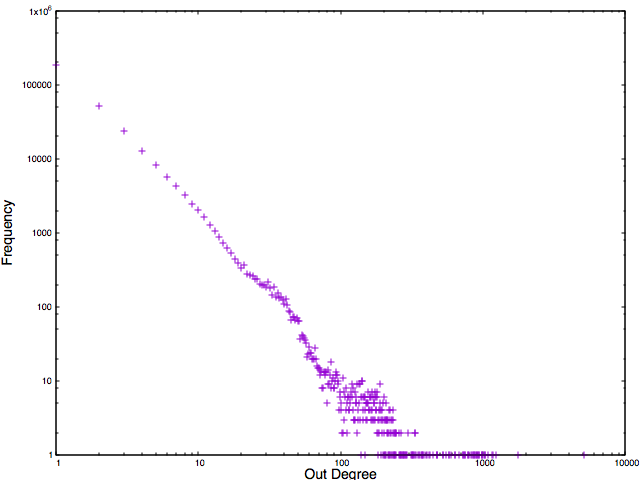}
        \caption{$\rpskg(2^{20}, 7, 9, T)$}
        \label{fig:s3=9}
    \end{subfigure}
    \label{fig:i1}
    \caption{Kinks in degree distribution observed
    when number of $3\times 3$ seeds used are
    close to number of $2\times 2$ seeds. $T$
    is the $2\times 2$ Graph500 initiator seed matrix given in Equation~\ref{eq:Graph500}.}
\end{figure}

\begin{figure}
    \begin{subfigure}[b]{0.5\textwidth}
        \includegraphics[width=\textwidth]{figs/graph_m=1048576_l=16_s2=1_s3=15}
        \caption{$\rpskg(2^{20}, 1, 15, T)$}
        \label{fig:s3=15}
    \end{subfigure}
    \begin{subfigure}[b]{0.5\textwidth}
        \includegraphics[width=\textwidth]{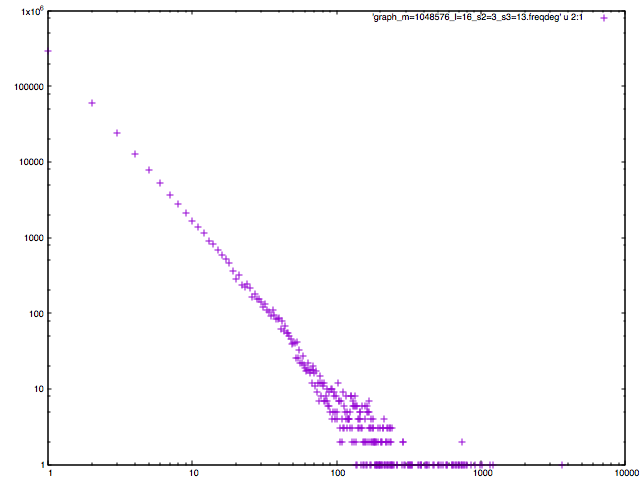}
        \caption{$\rpskg(2^{20}, 3, 13, T)$}
        \label{fig:s3=13}
    \end{subfigure}
    \caption{Smaller Kinks in degree distribution observed
    when the number of $3\times 3$ seeds used are not as
    close to number of $2\times 2$ seeds.}
\end{figure}

Figures~\ref{fig:s3=0} and~\ref{fig:s3=1}
illustrate the stark difference in the degree
distributions of graphs generated by \rpskg and SKG
models. Both plots are on a log-log scale and show
the frequency of nodes in the graph
for each possible (out-)degree --- in essence,
the degree distribution.

Figure~\ref{fig:s3=0} is a plot
of the degree distribution of a graph generated by
the SKG model. On the other hand,
Figure~\ref{fig:s3=1} is a plot of
the degree distribution when the \rpskg model is
used. Both models generate the same number of edges
($2^{20}$ edges) but differ in the number of vertices
generated. For the SKG model,
$2^{16}$ nodes are generated but for the \rpskg model,
$3\cdot 2^{15}$ nodes are generated.

As expected, Figure~\ref{fig:s3=0}
shows unwanted oscillations,
which according to Seshadhri et al.
~\cite{Seshadhri:2013} is most accurately
characterized as fluctuating between a log-normal
distribution and exponential tail.

On the other hand, using a single $3\times 3$
seed in \rpskg produces the graph shown in
Figure~\ref{fig:s3=1} with no salient oscillations.
This degree distribution plot seems to approximate
a log-normal distribution. In essence, 
it appears that
we are able
to achieve the same objective (in terms of the
generated degree distribution)
as NSKG without
generating or storing $\ell$ random numbers (needed
for the edge generation process of Noisy SKG).

Analogously,
Figures~\ref{fig:s2=0} and~\ref{fig:s2=1}
illustrate the stark difference in the degree
distributions of graphs generated by \rpskg and SKG
models but when $3\times 3$ seeds are the
primary seeds applied (as opposed to
the $2\times 2$ seeds).

Figures~\ref{fig:s2=0} and~\ref{fig:s3=0}
are almost indistinguishable, showing that
the use of only $2\times 2$ seeds or
only $3\times 3$ seeds will lead to
unwanted oscillations in the degree distribution.
Similar to Figure~\ref{fig:s2=1},
Figure~\ref{fig:s3=1} shows the degree
distribution when only one $2\times 2$ seed
is applied amidst $3\times 3$ seeds. The
degree distribution obtained in this figure
is approximately lognormal.

Furthermore, in our experiments on the graph with
$2^{20}$ edges we used
$k=1\ldots 16$ $3\times 3$ seeds
while keeping the total number of seeds used constant.
In essence, we ran
$\rpskg(2^{20}, 16-k, k, T)$ for $k=1\ldots 16$ where
$T$ is the $2\times 2$ Graph500 initiator seed
matrix (Equation~\ref{eq:Graph500}).
We noticed that the degree distributions of
generated graphs look more log-normal when only
a few number of either $2\times 2$ seeds or
$3\times 3$ seed are used --- when either
$k$ or $\ell$ is close to 1.

On the other hand, we noticed that when
$k$ is close to $\ell$, a
\emph{kink} starts to appear in the degree
distribution of the generated graph. Figures
~\ref{fig:s3=7},~\ref{fig:s3=8}, and
~\ref{fig:s3=9} show degree distributions for
graphs generated with $k=7, 8,$ and $9$
respectively. These three plots show a slight
kink in the degree distribution. Obviously,
these degree distributions
\textit{do not} exhibit
the unwanted oscillations generated
by SKG (see Figure~\ref{fig:s3=0}) but looks
less log-normal than the degree distribution in
Figure~\ref{fig:s3=1}.

\section{Conclusion}

Graph models can be used to model biological networks, social networks, communication
networks, relationships between products and advertisers, collaboration networks,
and even the world wide web.
However, major companies cannot
share all of their graph data because of copyright, legal, and privacy issues. This has led to the development of graph generation
algorithms, one of which is the family of stochastic kronecker
graph algorithms.

We have presented criteria for statistical and
computational identifiability of Stochastic Kronecker
Graphs. The criteria is used to conjecture that we can
classify SKG models into different classes and show
implications of such conjectures. Furthermore,
we also provide evidence for
the conjectures via simulations. 
The $\rpskg$ algorithm is also presented and can be used to
interpolate between the use of $2\times 2$ seeds and
$3\times 3$ seeds when generating graphs via stochastic
Kronecker multiplication. This algorithm can result in the
removal of oscillations in the resulting degree oscillations, thus
providing a superior case for its use. 

We believe that our work represents a starting point for further
understanding of graph generation algorithms and their properties
via a systematic notion of identifiability. Furthermore, here
are adjacent areas for exploration:

\paragraph{Separation Results}
We have been able to show separation of SKG models via the
statistical and computational identifiability notions. We studied
two prominent properties of graphs: the existence of isolated vertices
and the whether the graph is connected. However, we believe other
properties could also show (parameterized) separation across SKG models
and leave such exploration to future work.
Definition~\ref{def:kgd} defines a Kronecker graph distribution
function over any starting seed. We have defined efficient
procedures for sampling seeds from another seed of relative prime
dimension. It is clear that every graph has a Kronecker graph
distribution but it is not immediate, under what conditions,
every distribution has a corresponding realizable graph. We leave
this question to future exploration.

\paragraph{Privacy}
One of the main applications of the stochastic kronecker graph
model is to generate synthetic graph data. This is also a solution to
``machine unlearning'' now mandated in some some regulations.
\footnote{See Article 17 of the GDPR (EU General Data Protection Regulation) --- Right to erasure (``right to be forgotten''): \url{https://gdpr-info.eu/art-17-gdpr/}.}
The process and quality
of synthetic data is a major open problem in privacy-preserving
statistics and machine learning. We have identified
criteria (via the identifiability notions)
to measure the utility of generated graphs. However, what are the
necessary conditions (e.g., on graph size) to generate graphs that
are always within a given statistical distance?

\paragraph{Redistricting}
Recent work (e.g., ~\citep{McCartanImai23}) has shown random sampling
of graph partitions is a popular tool for evaluating
the effectiveness or bias of legislative redistricting plans.
Stochastic Kronecker Graphs, with accuracy measures and effective
degree distribution characteristics,
can be used to evaluate \textit{similar}
redistricting plans from the same starting seed. We leave the
applications of stochastic kronecker graphs for redistricting to
future exploration.

\clearpage

\bibliographystyle{alpha}
\bibliography{main}

\end{document}